\newif\ifsubmit
\newcommand{\EZ}[1]{{#1}} 
\newcommand{\DA}[1]{{#1}} 
\newcommand{\DAComm}[1]{} 
\newcommand{\EZComm}[1]{}
\newcommand{\PBComm}[1]{}
\newcommand{\EZ}[1]{\textcolor{blue}{#1}}  
\newcommand{\DA}[1]{\textcolor{orange}{#1}} 
\newcommand{\DAComm}[1]{{\scriptsize\textcolor{orange}{[\bf{Davide: }#1}]}}
\newcommand{\EZComm}[1]{{\scriptsize\textcolor{blue}{[\bf{Elena: }#1}]}}
\newcommand{\PBComm}[1]{{\scriptsize\textcolor{purple}{[\bf{Pietro: }#1}]}}
\newcommand{\refToFigure}[1]{Fig.~\ref{fig:#1}}
\newcommand{\refToSection}[1]{Sect.~\ref{sect:#1}}
\newcommand{\refToTheorem}[1]{Theorem~\ref{theo:#1}}
\newcommand{\refToLemma}[1]{Lemma~\ref{lemma:#1}}
\newcommand{\refToRule}[1]{{\small \textsc{(#1)}}}
\newcommand{\Space}{\hskip 0.8em}
\newcommand{\BigSpace}{\hskip 1.5em}
\newcommand{\dom}[1]{\mathit{dom}(#1)}
\newcommand{\Tuple}[1]    {({#1})}
\newcommand{\Pair}[2]     {\Tuple{{#1},{#2}}}
\newcommand{\fv}[1]{\textit{fv}(#1)}
\newcommand{\vars}[1]{\textit{vars}(#1)}
\newcommand{\NamedRule}[4]{{\tiny\textsc{({#1})}}\displaystyle\frac{#2}{#3}\ \begin{array}{l} #4 \end{array}}
\newcommand{\infeRule}[3]{\infer[\tiny\textsc{({#1})}]{#3}{#2}}
\newenvironment{grammatica}{$\begin{array}[t]{llll}}{\end{array}$}
\newcommand{\produzione}[3]{#1&{:}{:}=&#2 & \mbox{{\small{#3}}}}
\newcommand{\terminale}[1]{\texttt{#1}}
\newcommand{\nonterminale}[1]{\textit{#1}}
\newcommand{\fd}{\nonterminale{fd}}
\newcommand{\f}{\nonterminale{f}}
\newcommand{\undef}{\nonterminale{undef}}
\newcommand{\g}{\nonterminale{g}}
\newcommand{\fdBar}{\overline{\fd}}
\newcommand{\x}{\nonterminale{x}} 
\newcommand{\y}{\nonterminale{y}} 
\newcommand{\z}{\nonterminale{z}} 
\newcommand{\xBar}{\overline{\x}}
\newcommand{\E}{\nonterminale{e}}
\newcommand{\EBar}{\overline{\E}}
\newcommand{\s}{\nonterminale{se}}
\newcommand{\n}{\nonterminale{ne}}
\newcommand{\be}{\nonterminale{be}}
\newcommand{\true}{\texttt{true}}
\newcommand{\false}{\texttt{false}}
\newcommand{\cons}[2]{{#1}\mathbin{\terminale{:}}{#2}}
\newcommand{\tail}[1]{{#1}{\char`\^}}
\newcommand{\IfThenElse}[3] {{\terminale{if}\mathrel{#1}\terminale{then}}\mathrel{#2}{\terminale{else}\mathrel{#3}}}
\newcommand{\PW}[3]{#1{#2}#3}
\newcommand{\nop}{\nonterminale{nop}} 
\newcommand{\op}{\nonterminale{op\ }}
\newcommand{\pw}[1]{[{#1}]}
\newcommand\IL{\|}
\newcommand{\pwnop}{\pw{\nop}}
\newcommand{\pwop}{\pw{\nonterminale{op}}}
\newcommand{\Caps}[2]{\Pair{#1}{#2}}
\newcommand{\call}{\mathit{c}}
\newcommand{\val}{\nonterminale{v}}
\newcommand{\vBar}{\overline{\val}}
\newcommand{\callEnv}{\tau}
\newcommand{\emptyMap}{\ensuremath{\emptyset}}
\newcommand{\mapEnv}{\rho}
\newcommand{\mapEnvPrime}{{\mapEnv'}}
\newcommand{\mapEnvU}{{\widehat{\mapEnv}}}
\newcommand{\sv}{\nonterminale{s}}
\newcommand{\nv}{\nonterminale{n}}
\newcommand{\bv}{\nonterminale{b}}
\newcommand{\opsem}[5]{{#1,#2,#3}\!\Downarrow\!\Pair{#4}{#5}}
\newcommand{\bigmapUnion}[2]{\bigsqcup_{#1} #2}
\newcommand{\Subst}[3]   {#1 [#2/#3]}
\newcommand{\Update}[3]   {#1 \{#2\mapsto#3\}}
\newcommand{\At}[4]{\atfun_#1(#2,#3)=#4}
\newcommand{\Atop}[3]{\atfun_#1(#2,#3)}
\newcommand{\atfun}{\mathit{at}}
\newcommand{\wdOpSem}{\mathit{wd}}
\newcommand{\indx}{i}
\newcommand{\jndx}{j}
\newcommand{\stream}{\sigma} 
\newcommand{\eval}[2]{\sem{#1}#2}
\newcommand{\subst}{\theta}
\newcommand{\AppMap}[2]{ #1[#2] }
\newcommand{\sem}[1]{{\llbracket#1\rrbracket}}
\newcommand{\WD}[2]{\mathsf{wd}_\mapEnv(#1,#2)}
\newcommand{\WDUpdated}[2]{\mathsf{wd}_{\mapEnv'}(#1,#2)}
\newcommand{\map}{\mathit{m}}
\newcommand{\incrMap}{\map^{+1}}
\newcommand{\decrMap}{\map^{-1}}
\newcommand{\wdRel}[3]{#1\bowtie(#2,#3)}
\newcommand{\AT}[2]{\atfun_\mapEnv(#1,#2)}
\newcommand{\premise}{\vdash}
\newcommand{\premisestar}{\vdash^\star}
\newcommand{\len}[1]{\mathit{length(#1)}}
\newcommand{\owd}[3]{\mathsf{owd}_\mapEnv(#1,#2,#3)}
\newcommand{\owdUpdated}[3]{\mathsf{owd}_{\mapEnv'}(#1,#2,#3)}
\newcommand{\p}{\pi}
\newcommand{\optmap}{\mathsf{m}}
\newcommand{\emptyPath}{\ensuremath{\epsilon}}
\newcommand{\appOp}{\ensuremath{\mathbin{\cdot}}} 
\newcommand{\sumFrom}{\ensuremath{\mathit{sum}}} 
\begin{document}
\title{\EZ{Enhancing expressivity\\ of checked corecursive streams (extended version)}
}
%
%

\author{Davide Ancona \and Pietro Barbieri \and Elena Zucca}
\authorrunning{D. Ancona et al.}
%
\institute{DIBRIS, University of Genova
} 
\maketitle             
\begin{abstract}
We propose a novel approach to stream definition and manipulation.\EZComm{cut:, which reconciles, in a sense, lazy evaluation and regular corecursion} Our solution is based on two key ideas.
\EZ{Regular corecursion, which avoids non termination by detecting cyclic calls,} is enhanced, by allowing in equations defining streams other operators besides the stream constructor. In this way, some non-regular streams are definable. Furthermore, execution includes a runtime check to ensure that the stream generated by a function call is well-defined, in the sense that access to an arbitrary index always succeeds. \EZ{We extend the technique beyond the simple stream operators considered in previous work, notably by adding an \emph{interleaving} combinator which has a non-trivial recursion scheme. }
\keywords{operational semantics \and stream programming \and runtime \mbox{checking}}
\end{abstract}

\PBComm{17 pagine esclusi reference e appendice,  submission entro 20/02 AoE}

\section{Introduction}\label{sect:intro}

Applications often deal with data structures which are conceptually infinite; among those \emph{data streams} ({unbounded} \EZComm{mi sembra misleading, al limite toglierei la frase in parentesi} sequences of data) are a paradigmatic example, important in several application domains as the Internet of Things.  
Lazy evaluation is a well-established and widely-used solution to data stream generation and processing, supported, e.g., in Haskell, and in most stream libraries offered by mainstream languages, as \lstinline!java.util.stream!. In this approach, data streams can be defined as the result of an arbitrary function. For instance, in Haskell we can write
\begin{lstlisting}
one_two = 1:2:one_two$\BigSpace\texttt{-- 1:2:1:2:1: ...}$
from n = n:from(n+1)$\BigSpace\texttt{-- n:n+1:n+2: ...}$
\end{lstlisting}
Functions which only need to inspect a finite portion of the structure, e.g., getting the $i$-th element, can be correctly implemented, thanks to the lazy evaluation strategy as exemplified below. 
\begin{lstlisting}
get_elem 3 (one_two)$\BigSpace\texttt{-- evaluates to 2}$
get_elem 3 (from 5)$\BigSpace\texttt{-- evaluates to 7}$
\end{lstlisting}

More recently, another approach has been proposed \cite{Jeannin17,SimonBMG07,DagninoAZ20,AnconaBDZ20}, called \emph{regular corecursion}, which exploits  the fact that streams as \lstinline{one_two} above  are periodic, a.k.a. \emph{regular} following the terminology in \cite{Courcelle83}, meaning that the term \texttt{1:2:1:2:1: ...} is infinite but has a finite number of subterms.
Regular streams can be actually represented at runtime by a finite set of equations involving only the stream constructor, in the example $\x = 1 : 2 : \x$.
{Furthermore, function definitions are \emph{corecursive}, meaning that they do not have the standard inductive semantics; indeed, even though the evaluation strategy is call-by-value, thanks to the fact that pending function calls are tracked, cyclic calls are detected, avoiding in this case non-termination. 

For instance, with regular corecursion\footnote{Here we use the syntax of our calculus, where, differently from Haskell, functions are uncurried, that is, take as arguments possibly empty tuples delimited by parentheses.} we have:
\begin{lstlisting}
one_two() = 1:2:one_two()
from(n) = n:from(n+1)
get_elem(3,one_two())$\BigSpace\texttt{-- evaluates to 2}$
get_elem (3,from(5))$\BigSpace\texttt{-- leads to non-termination}$
\end{lstlisting}
Despite their differences, in both approaches programmers are allowed to write intuitively ill-formed definitions such as {\lstinline!bad_stream() = bad_stream()!; any access to indexes of the stream returned by this function  leads to non-termination both with lazy evaluation and regular corecursion.
However, while in the regular case it is simple to reject the result of calling \lstinline{bad_stream} by checking a guardedness syntactic condition, the Haskell compiler does not complain if one calls such a function.
In this paper, we propose a novel approach to stream {generation} and manipulation, \EZ{providing, in a sense, a middle way between} those described above. Our solution is based on two \mbox{key ideas:}
\begin{itemize}
\item Corecursion is enhanced, by allowing in {stream equations other typical operators besides the stream constructor}; in this way, some non-regular streams are supported. For instance, we can define \lstinline{from(n)=n:(from(n)[+]repeat(1))}, with \lstinline{[+]} the pointwise addition and \lstinline{repeat} defined by
\lstinline{repeat(n)=n:repeat(n)}.
\item Execution includes a runtime check \EZ{which rejects the stream generated by a function call if it is ill-formed, in the sense that access to an index could possibly diverge}. For instance, the call \lstinline{bad_stream()} raises a runtime error.
\end{itemize}
In this way we {achieve a convenient} trade-off between expressive power and reliability; indeed, we do not have the full expressive power of Haskell, where we can manipulate streams generated as results of arbitrary functions, but, clearly, the well-definedness check described above would be not decidable. 
On the other hand, we {significantly} augment the expressive power of regular corecursion, allowing {several} significant non-regular streams, at the price of making the well-definedness check non-trivial, but still decidable. 

The main {formal} results are (1) \refToTheorem{iff} stating the soundness of the runtime check;   (2) \refToTheorem{wd-eq-gen} stating that the
{optimized definition of the runtime check in \refToSection{opt-wd} is equivalent to the simpler one given in \refToSection{wd}}.
  {In particular, for contribution (1) {the interleaving operator requires a more involved proof in comparison with} \cite{AnconaBZ21} (see \refToSection{conclu}), while for (2) we show that the optimized definition improves the time complexity from $O(N^2)$ to $O(N \log N)$.}
  
In \refToSection{calculus} we formally define the calculus, in \refToSection{examples} we show examples, in \refToSection{wd} we define the well-formedness check, and in \refToSection{opt-wd} its optimized version. 
Finally, in \refToSection{conclu} we discuss related and further work. {The Appendix contains more examples of derivations and omitted proofs.}

\section{Stream calculus}\label{sect:calculus}

\refToFigure{stream-syntax} shows the syntax of the calculus. 
\begin{figure}
\begin{grammatica}
\produzione{\fdBar}{\fd_1\ldots\fd_n}{program}\\
\produzione{\fd}{\f(\xBar) = \s}{function declaration}\\
\produzione{\E}{\s \mid \n \mid \be}{expression}\\
\produzione{\s}{\x \mid \IfThenElse{\be}{\s_1}{\s_2} \mid \cons{\n}{\s} \mid \tail{\s} \mid {\PW{\s_1}{\op}{\s_2}} \mid \f(\EBar)}{stream expression}\\
\produzione{\n}{\x \mid \s(\n) \mid {\n_1\mathop{\nop}\n_2} \mid 0 \mid 1 \mid 2 \mid ...}{numeric expression}\\
\produzione{\be}{\x \mid \true \mid \false \mid ...}{boolean expression}\\
\produzione{\op}{\pwnop \mid \IL}{binary stream operator}\\
\produzione{\nop}{+\ \mid\ -\ \mid\ *\ \mid\ /}{numeric operator}
\end{grammatica}
\caption{Stream calculus: syntax}
\label{fig:stream-syntax}
\end{figure}

A program is a sequence of (mutually recursive) function declarations, for simplicity assumed to only return streams. Stream expressions are variables, conditionals, expressions built by stream operators, and function calls. We consider the following stream operators: constructor (prepending a numeric element), tail, pointwise {arithmetic} operators, {and interleaving}. Numeric expressions  include the access to the $i$-th\footnote{For simplicity, here indexing and numeric expressions coincide.\EZComm{cut:, even though indexes are expected to be natural numbers, while values in streams can range over a larger numeric domain.}}  element of a stream. We use $\fdBar$ to denote a sequence $\fd_1, \dots, \fd_n$ of function declarations, and analogously for other sequences.

The operational semantics, given in \refToFigure{stream-sem}, is based on two key ideas:
\begin{enumerate}
\item some infinite streams can be represented in a finite way
\item evaluation keeps trace of already considered function calls 
\end{enumerate}

\begin{figure}
\begin{small}
\begin{grammatica}
\produzione{\call}{\f(\vBar)}{{(evaluated)} call}\\
\produzione{\val}{\sv \mid \nv \mid \bv}{value}\\
\produzione{\sv}{\x \mid \cons{\nv}{\sv} \mid \tail{\sv} \mid \PW{\sv_1}{\op}{\sv_2} }{{{(open)} stream value}}\\
\produzione{{\indx, \nv}}{0 \mid 1 \mid 2 \mid ...}{{index, numeric value}}\\
\produzione{\bv}{\true \mid \false}{boolean value}\\
\produzione{\callEnv}{\call_1\mapsto\x_1\ \ldots\ \call_n\mapsto\x_n \Space (n\geq0)}{call trace}\\
\produzione{\mapEnv}{{\x_1\mapsto\sv_1 \ldots \x_n\mapsto\sv_n} \Space (n\geq0)}{environment}\\
\end{grammatica}
\\[2ex]

\hrule 

$\begin{array}{l}
  \\
\NamedRule{val}{}{\opsem{\val}{\mapEnv}{\callEnv}{\val}{\mapEnv}}{}
\Space
\NamedRule{if-t}{
\opsem{\be}{\mapEnv}{\callEnv}{\true}{\mapEnv} \Space
\opsem{\s_1}{{\mapEnv}}{\callEnv}{\sv}{\mapEnv'}
}{\opsem{\IfThenElse{{\be}}{\s_1}{\s_2}}{\mapEnv}{\callEnv}{\sv}{{\mapEnv'}}
}{}
\Space
\NamedRule{if-f}{
\opsem{\be}{\mapEnv}{\callEnv}{\false}{{\mapEnv}} \Space
\opsem{\s_2}{{\mapEnv}}{\callEnv}{\sv}{{\mapEnv'}}
}{\opsem{\IfThenElse{{\be}}{\s_1}{\s_2}}{\mapEnv}{\callEnv}{\sv}{{\mapEnv'}}
}{}
\\[6ex]
\NamedRule{cons}{
\opsem{\n}{\mapEnv}{\callEnv}{\nv}{\mapEnv}\Space
\opsem{\s}{\mapEnv}{\callEnv}{\sv}{\mapEnv'}}{
\opsem{\cons{\n}{\s}}{\mapEnv}{\callEnv}{\cons{\nv}{\sv}}{\mapEnv'}}
{} \BigSpace
\NamedRule{tail}{
\opsem{\s}{\mapEnv}{\callEnv}{\sv}{\mapEnv'}}{
\opsem{\tail{\s}}{\mapEnv}{\callEnv}{\tail{\sv}}{\mapEnv'}}
{}
\BigSpace
{
\NamedRule{op}{
\opsem{\s_1}{\mapEnv}{\callEnv}{\sv_1}{\mapEnv_1}\Space
\opsem{\s_2}{\mapEnv}{\callEnv}{\sv_2}{\mapEnv_2}}{
\opsem{\PW{\s_1}{\op}{\s_2}}{\mapEnv}{\callEnv}{\PW{\sv_1}{\op}{\sv_2}}{\mapEnv_1\sqcup\mapEnv_2}}
{}}
\\[6ex]
{\NamedRule{args}{  \begin{array}{l}
    \opsem{\E_i}{\mapEnv}{\callEnv}{\val_i}{{\mapEnv_i}}\Space \forall i \in 1..n
    \BigSpace
    \opsem{\f(\vBar)}{{\mapEnvU}}{\callEnv}{\sv}{\mapEnvPrime}
  \end{array}
}{ \opsem{\f(\EBar)}{\mapEnv}{\callEnv}{\sv}{\mapEnvPrime}}
{ \EBar=\E_1,\ldots,\E_n\ \mbox{not of shape}\ \vBar\\
\vBar=\val_1,\ldots,\val_n\\
\mapEnvU = \bigmapUnion{i \in 1..n}{\mapEnv_i}}}

\\[6ex]
{\NamedRule{invk}{  \begin{array}{l}
    \opsem{\Subst{\s}{\vBar}{\xBar}}{\mapEnv}{\Update{\callEnv}{\f(\vBar)}{\x}}{\sv}{\mapEnvPrime}
  \end{array}
}{ \opsem{\f(\vBar)}{\mapEnv}{\callEnv}{{\x}}{\Update{\mapEnvPrime}{\x}{\sv}}}
{ {\f(\vBar)\not\in\dom{{\callEnv}}}\\
 \x\ \mbox{fresh}\\ 
\mathit{fbody}(\f)=\Pair{\xBar}{\s}\\
\wdOpSem(\mapEnvPrime,\x,\sv)}}
\BigSpace
{\NamedRule{corec}{
}{ \opsem{\f(\vBar)}{\mapEnv}{\callEnv}{\x}{\mapEnv}}  
{
{\callEnv}({f(\vBar)})=\x\\
}}
\\[8ex]
\NamedRule{at}{
  \opsem{\s}{\mapEnv}{\callEnv}{\sv}{{\mapEnv'}}\Space  
  \opsem{\n}{\mapEnv}{\callEnv}{\indx}{\mapEnv}\Space
}{ \opsem{\s(\n)}{\mapEnv}{\callEnv}{\nv}{ {\mapEnv} } }
{  \At{\mapEnvPrime}{\sv}{\indx}{\nv}}
\\[6ex]
\hline
\\
\NamedRule{at-var}
{\At{\mapEnv}{\mapEnv(\x)}{\indx}{\nv'}}
{\At{\mapEnv}{\x}{\indx}{\nv' }}
{  }
\BigSpace
\NamedRule{at-cons-0}
{}
{\At{\mapEnv}{\cons{\nv}{\sv}}{0}{\nv }}
{  }
\BigSpace
{\NamedRule{at-cons-succ}
{\At{\mapEnv}{\sv}{\indx}{\nv'}}
{\At{\mapEnv}{\cons{\nv}{\sv}}{\indx+1}{\nv'}}
{}}
\\[6ex]
\NamedRule{at-tail}
{\At{\mapEnv}{\sv}{\indx+1}{\nv}}
{\At{\mapEnv}{\tail{\sv}}{\indx}{\nv}}
{}
\BigSpace
\NamedRule{{at-nop}}
{\At{\mapEnv}{\sv_1}{\indx}{\nv_1}\Space{\At{\mapEnv}{\sv_2}{\indx}{\nv_2}}}
{\At{\mapEnv}{\PW{\sv_1}{\pwnop}{\sv_2}}{\indx}{\nv_1\mathbin{\nop}\nv_2}}
{}
\\[6ex]
{
\NamedRule{at-$\IL$-even}
{\At{\mapEnv}{\sv_1}{\indx}{\nv}}
{\At{\mapEnv}{\sv_1\IL\sv_2}{2\indx}{\nv}}
{}
\BigSpace
\NamedRule{at-$\IL$-odd}
{\At{\mapEnv}{\sv_2}{\indx}{\nv}}
{\At{\mapEnv}{\sv_1\IL\sv_2}{2\indx+1}{\nv}}
{}}
\end{array}$
\end{small}
\caption{Stream calculus: operational semantics}\label{fig:stream-sem}
\end{figure}

To obtain (1), our approach is inspired by \emph{capsules} \cite{JeanninK12}, which are expressions supporting cyclic references. That is, the \emph{result} of a stream expression is a pair $\Pair{\sv}{\mapEnv}$, where $\sv$ is an \emph{(open) stream value}, built on top of stream variables, numeric values, the stream constructor, the tail destructor, the pointwise arithmetic and the interleaving operators, and $\mapEnv$ is an \emph{environment} mapping variables into stream values.
In this way, cyclic streams can be obtained: for instance, $\Pair{\x}{\x\mapsto\cons{\nv}{\x}}$ denotes the stream constantly equal to 
 $\nv$.
 
 We denote by {$\dom{\mapEnv}$ the domain of $\mapEnv$,} by $\vars{\mapEnv}$ the set of variables occurring in $\mapEnv$, by $\fv{\mapEnv}$ the set of its free variables, that is, $\vars{\mapEnv}\setminus\dom{\mapEnv}$, and say that $\mapEnv$ is \emph{closed} if $\fv{\mapEnv}=\emptyset$, \emph{open} otherwise, and analogously for a result $\Pair{\val}{\mapEnv}$.

To obtain {point (2) above}, evaluation has an additional parameter which is a \emph{call trace}, a map from function calls where arguments are values  (dubbed \emph{calls} for short in the following) into variables. 

Altogether, the semantic judgment has shape $\opsem{\E}{\mapEnv}{\callEnv}{\val}{\mapEnvPrime}$, {where 
$\E$ is the expression to be evaluated, $\mapEnv$ the current environment defining possibly cyclic stream values that can occur in $\E$,
  $\callEnv$  the call trace, and $\Pair{\val}{\mapEnvPrime}$ the result.}
  The semantic judgments should be indexed by an underlying (fixed) program, 
 omitted for sake of simplicity. Rules use the following auxiliary definitions:
\begin{itemize}
\item $\mapEnv\sqcup\mapEnv'$ is the union of two environments, which is well-defined if they have disjoint domains; $\Update{\mapEnv}{\x}{\sv}$ is the environment which gives $\sv$ on $\x$, coincides with $\mapEnv$ elsewhere; we use analogous notations for call traces.
\item $\Subst{\s}{\vBar}{\xBar}$ is obtained by {parallel substitution of} variables $\xBar$ {with} values $\vBar$.
\item $\mathit{fbody}(\f)$ returns the pair of the parameters and the body of the declaration of $\f$, if any, {in the assumed program}.
\end{itemize}

Intuitively, a closed result $\Pair{\sv}{\mapEnv}$ is well-defined if it denotes a unique stream, and {a closed environment} $\mapEnv$ is well-defined if, for each $\x\in\dom{\mapEnv}$, $\Pair{\x}{\mapEnv}$ is well-defined. In other words, the corresponding set of equations admits a unique solution. For instance, the {environment
  ${\{\x\mapsto\x\}}$} is not well-defined, since it is undetermined (any stream satisfies the equation {$\x=\x$}); the {environment $\{\x\mapsto\x[+]\y,\y\mapsto\cons{1}{\y}\}$} is not well-defined as well, since it is undefined ({the two equations $\x=\x\mapsto\x[+]\y,\y=\cons{1}{\y}$ admit no solutions for $x$}). 
This notion can be generalized to open results and environments, assuming that free variables denote unique streams, as will be formalized in \refToSection{wd}. 

Rules for values and conditional are straightforward. In rules \refToRule{{cons}}, \refToRule{tail} and \refToRule{op}, arguments are evaluated and the stream operator is applied without any further evaluation. That is, we treat all these operators as constructors.

The rules for function call are based on a mechanism of cycle detection \cite{AnconaBDZ20}. Evaluation of arguments is handled by a separate rule \refToRule{args}, whereas the following two rules handle (evaluated) calls.

Rule \refToRule{invk} is applied when a call is considered for the first time, as expressed by the first side condition. The body is retrieved by using the auxiliary function \textit{fbody}, and evaluated in  a call trace where the call has been mapped into a fresh variable. Then, it is checked that adding  the association of such variable with the result of the evaluation of the body keeps the environment well-defined, {as expressed by the judgment $\wdOpSem(\mapEnv,\x,\sv)$, which will be defined in \refToSection{wd}.}
If the check succeeds, then the final result consists of the variable associated with the call and the updated environment. For simplicity, here execution is stuck if the check fails; an implementation should raise a runtime error instead. An example of stuck derivation is presented in the Appendix (\refToFigure{stuck_derivation}).

Rule \refToRule{corec} is applied when a  call is considered for the second time, as expressed by the first side condition.
The variable $\x$ is returned as result. {However, there is no associated value in the environment yet; in other words, the {result $\Pair{\x}{\mapEnv}$ is open} at this point. } This means that $\x$ is undefined until the environment is updated with the corresponding value in rule \refToRule{invk}. However, $\x$ can be safely used as long as the evaluation does not require $\x$ to be inspected; for instance, $\x$ can be safely passed as an argument to a function call.

For instance, if we consider the program \lstinline!f()=g()  g()=1:f()!, then the judgment
  $\opsem{\mbox{\lstinline{f()}}}{\emptyMap}{\emptyMap}{\x}{\mapEnv}$, with $\mapEnv=\{\x\mapsto\y,\y\mapsto \cons{1}{\x}\}$,  is derivable;
  however, while the final result $\Pair{\x}{\mapEnv}$ is closed, the derivation contains also judgments with open results, as happens for
$\opsem{\mbox{\lstinline{f()}}}{\emptyMap}{\{\mbox{\lstinline{f()}}\mapsto\x,\mbox{\lstinline{g()}}\mapsto\y\}}{\x}{\emptyMap}$ and $\opsem{\mbox{\lstinline{g()}}}{\emptyMap}{\{\mbox{\lstinline{f()}}\mapsto\x\}}{\y}{\{\y\mapsto\cons{1}{\x}\}}$. {The full derivation is presented in the Appendix (\refToFigure{derivation1}), together with another example (\refToFigure{derivation2}).}


  Finally, rule \refToRule{at} computes the $\indx$-th element of a stream expression. After evaluating the arguments, the result is obtained by the auxiliary judgment $\At{\mapEnv}{\sv}{\indx}{\nv}${, whose straightforward definition is at the bottom of the figure.
    {Rules \refToRule{at-$\IL$-even} and \refToRule{at-$\IL$-odd} define the behaviour of the interleaving operator, which merges two streams together by alternating their elements.}

    \DA{When evaluating $\atfun_\mapEnv(\sv,\indx)$, if $\sv$} is a variable free in the environment, then execution is stuck; again, \mbox{an implementation should raise a runtime error instead. }


\section{Examples}\label{sect:examples}
First we show some simple examples, to explain how {corecursive definitions} work. Then we provide some more significant examples.

Consider the following function declarations:
\begin{lstlisting}
repeat(n) = n:repeat(n) 
one_two() = 1:two_one()
two_one() = 2:one_two()
\end{lstlisting}
With the standard semantics of recursion, the calls, e.g., \lstinline!repeat(0)! and \lstinline!one_two()! lead to non-termination. Thanks to {corecursion}, instead, these calls terminate, producing as result $\mathtt{\Caps{\x}{\{\x\mapsto\cons{0}{\x}\}}}$, and $\mathtt{\Caps{\x}{{\{}\x\mapsto\cons{1}{\y},\y\mapsto\cons{2}{\x}{\}}}}$, respectively. Indeed, when initially invoked, the call  \lstinline!repeat(0)! is added in the call trace with an associated fresh variable, say $\x$. In this way, when evaluating the body of the function, the recursive call is detected as cyclic, the variable $\x$ is returned as its result, and, finally, the stream value $\mathtt{\cons{0}{\x}}$ is associated in the environment {with} the result $\x$ of the initial call. In the sequel, we will use \lstinline![k]! as a shorthand for \lstinline!repeat($k$)!. The evaluation of \lstinline!one_two()! is {analogous, except} that another fresh variable $\y$ is generated for the intermediate call \lstinline!two_one()!. The formal derivations are given below.
\begin{small}
$$\begin{array}{l}
\infeRule{invk}{
  \infeRule{cons}{
   {\infeRule{value}{}{}} &
    \infeRule{corec}{
}{\opsem{\mathtt{repeat(0)}}{\emptyset}{\{\mathtt{repeat(0)}\mapsto\x\}}{\x}{\emptyset}}
}{\opsem{\mathtt{\cons{0}{repeat(0)}}}{\emptyset}{\{\mathtt{repeat(0)}\mapsto\x\}}{\cons{0}{\x}}{\emptyset}}}
{\opsem{\mathtt{repeat(0)}}{\emptyset}{\emptyset}{\x}{\{\x\mapsto\cons{0}{\x}\}}}
\end{array}$$

$$\begin{array}{l}
\infeRule{invk}{
  \infeRule{cons}{
  {\infeRule{value}{}{}} &
\infeRule{invk}{
  \infeRule{cons}{
  {\infeRule{value}{}{}} &
\infeRule{corec}{
}{\opsem{\mathtt{one\_two()}}{\emptyset}{\{\mathtt{one\_two()}\mapsto\x,\ \mathtt{two\_one()}\mapsto\y\}}{\x}{\emptyset}}
}{\opsem{\mathtt{2:one\_two()}}{\emptyset}{\{\mathtt{one\_two()}\mapsto\x,\ \mathtt{two\_one()}\mapsto\y\}}{2:\x}{\emptyset}}
}{\opsem{\mathtt{two\_one()}}{\emptyset}{\{\mathtt{one\_two()}\mapsto\x\}}{\y}{\{\y\mapsto2:\x\}}}
}{\opsem{\mathtt{1:two\_one()}}{\emptyset}{\{\mathtt{one\_two()}\mapsto\x\}}{1:\y}{{\{\y\mapsto2:\x\}}}}}
{\opsem{\mathtt{one\_two()}}{\emptyset}{\emptyset}{\x}{\{\x\mapsto 1:\y,\ \y\mapsto2:\x\}}}
\end{array}$$
\end{small}

For space reasons, we did not report the application of rule \refToRule{value}. In both derivations, note that rule \refToRule{corec} is applied, without evaluating the body of \lstinline{one_two} once more, when the cyclic call is detected. 

The following examples show function definitions {whose calls return} non-regular streams, notably, the natural numbers, {the natural numbers raised to the power of a number, the factorials, the powers of a  number,} the Fibonacci numbers, and the stream obtained by pointwise increment by one.
\begin{lstlisting}  
nat() = 0:(nat()[+][1])
nat_to_pow(n) =           //nat_to_pow(n)(i)=i^n 
  if n <= 0 then [1] else nat_to_pow(n-1)[*]nat()
fact() = 1:((nat()[+][1])[*]fact())
pow(n) = 1:([n][*]pow(n)) //pow(n)(i)=n^i
fib() = 0:1:(fib()[+]fib()^)
incr(s) = s[+][1]
\end{lstlisting}

The definition of \lstinline!nat! uses corecursion, since the recursive call \lstinline!nat()! is cyclic. Hence the call \lstinline!nat()! returns $\Caps{\x}{{\{}\x\mapsto\cons{0}{(\x[+]\y)}, \y\mapsto\cons{1}{\y}{\}}}$.
{The definition of \lstinline!nat_to_pow! is} a standard inductive one {where the argument strictly decreases in the recursive call}. Hence, the call, e.g., \lstinline!nat_to_pow(2)!, returns\\
\centerline{
$\Caps{\x_2}{{\{\x_2\mapsto\x_1[*]\x,\x_1\mapsto\x_0[*]\x, \x_0\mapsto\y, \y\mapsto\cons{1}{\y}, \x\mapsto\cons{0}{(\x[+]\y')}, \y'\mapsto\cons{1}{\y'}\}}}.$}
  The definitions of \lstinline!fact!, \lstinline!pow!, and \lstinline!fib! are corecursive. For instance, the call \lstinline!fact()! returns
  ${\Caps{\z}{\z\mapsto\DA{\cons{1}{((\x[+]\y)[*]z)}}, \x\mapsto\cons{0}{(\x[+]\y'}), \y\mapsto 1:\y, \y'\mapsto 1:\y'}}$. 
 The definition of \lstinline!incr! is non-recursive, hence always converges{, and the call \lstinline{incr(}$\sv$\lstinline{)} returns $\Caps{\x}{{\{}\x\mapsto \sv[+]\y, \y\mapsto\cons{1}{\y}{\}}}$}.
  
The next few examples show applications of the interleaving operator.
\begin{lstlisting}
dup_occ() = 0:1:(dup_occ() || dup_occ())
\end{lstlisting}
Function \lstinline!dup_occ()! generates the stream which alternates sequences of occurrences of \lstinline!0! and \lstinline!1!, with the number of repetitions of the same number duplicated at each step, that is,  \lstinline!(0:1:0:0:1:1:0:0:0:0...)!.

A more involved example shows a different way to generate the stream of all powers of \lstinline!2! starting from $2^1$:
\begin{lstlisting}
pow_two=2:4:8:((pow_two^^[*]pow_two)||(pow_two^^[*]pow_two^))
\end{lstlisting}
{The following definition is an instance of a schema generating the infinite sequence of labels obtained by a breadth-first visit of an infinite {complete} binary tree where the labels of children are defined in terms of  that of their parent.}
\begin{lstlisting}
bfs_index() = 1:((bfs_index()[*][2])||(bfs_index()[*][2][+][1]))
\end{lstlisting}
\EZComm{mettere esplicitamente lo schema?}\DAComm{si potrebbe se avessimo spazio sufficiente}
{In particular, the root is labelled by \lstinline{1}, and the left and right child of a node with label \lstinline{i} are labelled by \lstinline{2*i} and \lstinline{2*i+1}, respectively. Hence, the generated stream is the sequence of natural numbers starting from \lstinline{1}, as it happens in the array implementation of a binary heap.}

In the other instance below, the root is labelled by \lstinline{0}, and children are labelled with \lstinline{i+1} if their parent has label \lstinline{i} . That is, nodes are labelled \mbox{by their level.}
\begin{lstlisting}
bfs_level() = 0:((bfs_level()[+][1])||(bfs_level()[+][1]))
\end{lstlisting}
In this case, the generated stream is more interesting; indeed, \lstinline!bfs_level()(n) =! \lstinline!floor(log$_2$(n+1))!. 
\leavevmode
  
The following {function computes the stream of partial sums of the first $i+1$ elements of a stream $s$, that is,
  \lstinline!sum($s$)($\indx$)$=\sum_{k=0}^{\indx}s(k)$!}:
\begin{lstlisting}
sum(s) = s(0):(s^[+]sum(s))
\end{lstlisting}
Such a function is useful for computing streams whose elements approximate a series with increasing precision;
for instance, the following function returns the stream of partial sums of the first $i+1$ elements of the Taylor series of the exponential function:
\begin{lstlisting}
sum_expn(n) = sum(pow(n)[/]fact())
\end{lstlisting}
Function \lstinline!sum_expn! calls \lstinline!sum! with the argument \lstinline!pow(n)[/]fact()! corresponding to the stream of terms of the
Taylor series of the exponential;  hence, by accessing the $\indx$-th element of the stream, we have the following approximation of the series:
\begin{lstlisting}
sum_expn($\nv$)($\indx$)$=\displaystyle\displaystyle\sum_{k=0}^{\indx} \frac{{\nv}^k}{k!} = 1+\nv+\frac{{\nv}^2}{2!}+\frac{{\nv}^3}{3!}+\frac{{\nv}^4}{4!}+\cdots+\frac{\nv^{\indx}}{\indx!}$
\end{lstlisting}
Lastly, {we present a couple of examples showing how it is possible to define primitive operations provided
  by IoT platforms for real time analysis of data streams;} we start with \lstinline!aggr(n,s)!, which allows
  aggregation by addition of data in windows of length \lstinline!n!:
\begin{lstlisting}
aggr(n,s) = if n<=0 then [0] else s[+]aggr(n-1,s^) 
\end{lstlisting}
For instance, \lstinline!aggr(3,$\sv$)! returns the stream $\sv'$ s.t. $\sv'(\indx)=\sv(\indx)+\sv(\indx+1)+\sv(\indx+2)$.
{On} top of \lstinline!aggr!,  we can easily define \lstinline!avg(n,s)!
to compute the stream of average values of \lstinline!s! in windows of length \lstinline!n!:
\begin{lstlisting}
avg(n,s) = aggr(n,s)[/][n]  
\end{lstlisting}

\section{Well-definedness check}\label{sect:wd}

A key feature of our approach is the runtime check ensuring that the stream generated by a function call is well-defined, see the side condition $\wdOpSem(\mapEnvPrime,\x,\sv)$ in \refToRule{invk}; in this section we formally define the corresponding judgment and prove its soundness. Before doing this, we provide, for reference, a formal abstract definition of well-definedness.  

Intuitively, an environment is well-defined if each variable in its domain denotes a unique stream. 
Semantically, a stream $\stream$ is an infinite sequence of numeric values{, that is, a function which returns, for each index $i\geq 0$, the $i$-th element $\stream(i)$.}
Given a result $\Caps{\sv}{\mapEnv}$, we get a stream by instantiating variables in $\sv$ with streams, in a way consistent with $\mapEnv$, and evaluating {operators}.
To make this formal, we need some preliminary definitions.

A \emph{substitution}  {$\subst$} is a function from {a {finite} set of} variables to streams. We denote by $\eval{\sv}{\subst}$ the  stream obtained by applying $\subst$ to $\sv$, and evaluating {operators}, as formally defined below.
\begin{quote}
$\eval{\x}{\subst} =\subst(\x)$\\[1ex]
$(\eval{\cons{\nv}{\sv}}{\subst})(i) = 
\begin{cases}
\nv & i=0 \\ (\eval{\sv}{\subst})(i-1) & i\geq 1 
\end{cases}$\\[1ex]
$(\eval{\tail{\sv}}{\subst})(i) = \eval{\sv}{\subst}(i+1)\BigSpace i\geq 0$\\[1ex]
$(\eval{\PW{\sv_1}{\pwnop}{\sv_2}}{\subst})(i) = \eval{\sv_1}{\subst}(i) \mathbin{\nop} \eval{\sv_2}{\subst}(i)\BigSpace i\geq 0$\\[1ex]
$(\eval{\PW{\sv_1}{\IL}{\sv_2}}{\subst})(2i) = \eval{\sv_1}{\subst}(i)\BigSpace i\geq 0$\\[1ex]
$(\eval{\PW{\sv_1}{\IL}{\sv_2}}{\subst})(2i+1) = \eval{\sv_2}{\subst}(i)\BigSpace i\geq 0$
\end{quote}

Given an environment $\mapEnv$ and a substitution $\subst$ {with domain $\vars{\mapEnv}$}, the substitution  $\AppMap{\mapEnv}{\subst}$ is defined by: 
\begin{quote}
$\AppMap{\mapEnv}{\subst}(\x) = \begin{cases}
\eval{\mapEnv(\x)}{\subst} & \x \in \dom{\mapEnv} \\
\subst(x) & {\x\in \fv{\mapEnv}}
\end{cases}$
\end{quote}
Then, a \emph{solution} of $\mapEnv$ is a substitution $\subst$ such that $\AppMap{\mapEnv}{\subst} = \subst$. 

A closed environment $\mapEnv$ is \emph{well-defined} if it has exactly one solution. 
For instance, ${\{\x\mapsto\cons{1}{\x}\}}$ and ${\{\y\mapsto\cons{0}{(\y [+] \x)},\ \x\mapsto1:\x\}}$ are well-defined{, since their unique solutions map $\x$ to the infinite stream of ones, and $\y$ to the stream of natural numbers, respectively.} Instead, for ${\{\x\mapsto1[+]\x\}}$ there are no solutions. Lastly, an environment can be undetermined{:  for instance, a substitution mapping $\x$ into an arbitrary stream} is a solution of ${\{\x\mapsto\x\}}$.

{An open environment $\mapEnv$ is well-defined if, for each $\subst$ with domain $\fv{\mapEnv}$, it has exactly one solution $\subst'$ such that $\subst\subseteq\subst'$. For instance, the open environment  $\{\y\mapsto\cons{0}{(\y [+] \x)}\}$ is well-defined.} 


In \refToFigure{op-wd} we provide the operational characterization of well-definedness. 
\begin{figure}
\begin{small}
\begin{grammatica}
\produzione{\map}{{\x_1\mapsto\nv_1 \ldots \x_n\mapsto\nv_k} \Space (n\geq0)}{map from variables to integer numbers}
\end{grammatica}
\\
\hrule 
$\begin{array}{l}
  \\
\NamedRule{main}
{{\WDUpdated{\x}{\emptyMap}}}
{\wdOpSem(\mapEnv,\x,\val)}
{{\mapEnv'=\Update{\mapEnv}{\x}{\val}}}
\BigSpace    
\NamedRule{wd-var}
{\WD{\mapEnv(\x)}{\Update{\map}{\x}{0}}}
{\WD{\x}{\map} }
{\x\not\in\dom\map}
\\[5ex]
\NamedRule{wd-corec}
{}
{\WD{\x}{\map} }
{\EZ{\x\in\dom{\mapEnv}}\\
\map(x)>0}
\BigSpace
{\NamedRule{wd-delay}
{\WD{\mapEnv(\x)}{\Update{\map}{\x}{0}}}
{\WD{\x}{\map} }
{
\map(x)>0}}
\\[5ex]
{\NamedRule{wd-fv}
{}
{\WD{\x}{\map} }
{\x\not\in\dom\mapEnv}}
\BigSpace
\NamedRule{wd-cons}
{\WD{\sv}{{\incrMap}}}
{\WD{\cons{\nv}{\sv}}{\map}}
{}
\BigSpace
\NamedRule{wd-tail}
{\WD{\sv}{{\decrMap}}}
{\WD{\tail{\sv}}{\map}}
{}
\\[5ex]
\NamedRule{wd-nop}
{\WD{\sv_1}{\map}\Space\WD{\sv_2}{\map}}
{\WD{\PW{\sv_1}{\pwnop}{\sv_2}}{\map}}
{}
\BigSpace
\NamedRule{wd-$\IL$}
{\WD{\sv_1}{\map}\Space\WD{\sv_2}{\incrMap}}
{\WD{\PW{\sv_1}{\IL}{\sv_2}}{\map}}
{}
\end{array}$
\end{small}
\caption{Operational definition of well-definedness}\label{fig:op-wd}
\end{figure}
The judgment $\wdOpSem(\mapEnv,\x,\sv)$ used in the side condition of rule \refToRule{invk} holds if $\WDUpdated{\x}{\emptyMap}$ holds, with $\mapEnvPrime{=}{\Update{\mapEnv}{\x}{\val}}$. The judgment $\WD{\sv}{\emptyMap}$ means well-definedness of a result. That is, restricting the domain of $\mapEnv$ to the variables reachable from $\sv$ (that is, either occurring in $\sv$, or, transitively, in values associated {with} reachable variables) we get a well-defined environment; thus, $\wdOpSem(\mapEnv,\x,\sv)$ holds if adding the association of $\sv$ {with} $\x$ preserves well-definedness of $\mapEnv$.

The additional argument $\map$ in the judgment $\WD{\sv}{\map}$ is a map from variables to {integer} numbers.  We write $\incrMap$ and $\decrMap$ for the maps $\{(\x,\map(\x)+1) \mid {\x\in\dom\map}\}$, and $\{(\x,\map(\x)-1) \mid \x\in\dom\map\}$, respectively.

In rule \refToRule{main}, this map is initially empty.
{In rule \refToRule{wd-var}, when a variable $\x$ defined in the environment is found the first time, it is added in the map with initial value $0$ before propagating the check to the associated value.
In rule \refToRule{wd-corec}, when it is found the second time, it is checked that constructors and right operands of interleave are traversed more times than tail operators, and if it is the case the variable is considered well-defined.
Rule \refToRule{wd-delay}, which is only added for the purpose of the soundness proof and should be not part of an implementation\footnote{{Indeed, it does not affect derivability, see \refToLemma{remove-delay} in the following.}}, performs the same check but then considers the variable occurrence as it is was the first, so that success of well-definedness is delayed. } \EZ{Note that rules \refToRule{wd-var}, \refToRule{wd-corec}, and \refToRule{wd-delay} can only be applied if $\x\in\dom{\mapEnv}$; in rule \refToRule{wd-corec}, this explicit side condition could be omitted since satisfied by construction of the proof tree.   }

In rule \refToRule{wd-fv}, a free variable is considered well-defined.\footnote{Non-well-definedness can only be detected on closed results.} 
In rules \refToRule{wd-cons} and \mbox{\refToRule{wd-tail}} the value associated {with} a variable is incremented/decremented by one, respectively, before propagating the check to the subterm. In rule \mbox{\refToRule{wd-nop}} the check is simply propagated to the subterms. In rule \refToRule{wd-$\IL$}, the check is also propagated to the subterms, but on the right-hand side the value associated with a variable is incremented by one before propagation; this reflects the fact that, in the worst case, $\At{\mapEnv}{\sv_1\IL\sv_2}{\indx}{\atfun_\mapEnv(\sv_1,\indx)}$, and this happens only for $\indx=0$, while for odd indexes $\indx$ we have that $\At{\mapEnv}{\sv_1\IL\sv_2}{\indx}{\atfun_\mapEnv(\sv_2,\indx-k)}$, with $k\geq 1$; more precisely, $k=1$ only when $\indx=1$; for all indexes $\indx>1$ (both even and odd), $k>1$. 
For instance, the example \lstinline!s() = 1:(s()!$\IL$\lstinline!s()^)!, which has the same semantics as \lstinline![1]!, would be considered not well-defined if we treated the interleaving as the pointwise arithmetic operators. 

\EZ{Note that the rules in \refToFigure{op-wd} can be immediately turned into an algorithm  which, given a stream value $\sv$, always terminates either successfully (finite proof tree), or with failure (no proof tree can be constructed). On the other hand, the rules in \refToFigure{stream-sem} defining the $\At{\mapEnv}{\sv}{\indx}{\nv}$ judgment can be turned into an algorithm which can possibly diverge (infinite proof tree). }

In the Appendix we show two examples of derivation of well-definedness and access to the $\indx$-th element: the result $\Caps{\x}{\{\x\mapsto \cons{0}{(\x\ [+]\ \y)},\y\mapsto \cons{1}{\y}\}}$ obtained by evaluating the call \lstinline!nat()!(\refToFigure{ex1}), and a more involved example, the result $\Caps{\x}{\{\x\mapsto \cons{0}{(\x\ [+]\ \y)\ ||\ (\x\ [+]\ \y), \y\mapsto 1:\y}\}}$ obtained by evaluating the call \lstinline{bfs_level()}  (\refToFigure{ex2}), with \lstinline{nat} and \lstinline{bfs_level} defined as  in \refToSection{examples}. Below we show an example of failing derivation:

\begin{figure}
\begin{small}
$$\begin{array}{l}
\infeRule{wd-var}
{\infeRule{wd-cons}
{\infeRule{wd-$\IL$}
{\infeRule{wd-corec}
{}
{\WD{\x}{\{\x\mapsto 1\}}}
\BigSpace
\infeRule{wd-tail}
{\infeRule{wd-tail}
{\infeRule{??}
{\mathtt{FAIL}}
{\WD{\x}{\{\x\mapsto 0\}}}}
{\WD{\tail{\x}}{\{\x\mapsto 1\}}}}
{\WD{\tail{\tail{\x}}}{\{\x\mapsto 2\}}}}
{\WD{\x\ ||\ \tail{\tail{\x}}}{\{\x\mapsto 1\}}}}
{\WD{0:(\x\ ||\ \tail{\tail{\x}})}{\{\x\mapsto 0\}}}}
{\WD{\x}{\emptyset}}
\end{array}$$
\end{small}
\caption{Failing derivation for ${\mapEnv=\{\x\mapsto 0:(\x\ ||\ \tail{\tail{\x}}) \}}$}
\label{fig:ex-non-wd}
\end{figure}

As depicted in \refToFigure{ex-non-wd}, the check succeeds for the left-hand component of the interleaving operator, while the proof tree cannot be completed for the other side. Indeed, the double application of the tail operator makes undefined access to stream elements with index greater than $1$, \DA{since the evaluation of $\atfun_\mapEnv(\x,2)$ infinitely triggers the evaluation of itself}.

To formally express and prove that well-definedness of a result implies termination of access to an arbitrary index, we introduce some definitions and notations.
First of all, since  the result is not relevant for the following technical treatment, for simplicity we will write $\AT{\sv}{\indx}$ rather than $\At{\mapEnv}{\sv}{\indx}{\nv}$. 
We call \emph{derivation} an either finite or infinite proof tree. We write $\WD{\sv'}{\map'}\premise\WD{\sv}{\map}$ to mean that $\WD{\sv'}{\map'}$ is a premise of a \mbox{(meta-)rule} where $\WD{\sv}{\map}$ is the conclusion, and $\premisestar$ for the reflexive and transitive closure of this relation. 

\begin{lemma}\label{lemma:basics}\
\begin{enumerate}
\item\label{i} A judgment $\WD{\sv}{\emptyset}$ has no derivation
iff the following condition holds:\\
\begin{tabular}{ll}
\refToRule{wd-stuck}&$\WD{\x}{\map'}\premisestar \WD{\mapEnv(\x)}{\Update{\map}{\x}{0}}\premise\WD{\x}{\map}\premisestar\WD{\sv}{\emptyMap}$\\
&for some $\x\in\dom{\mapEnv}$, 
and $\map',\map$ s.t.\ $\x\not\in\dom{\map}, \map'(\x)\leq 0$.
\end{tabular}
\item\label{ii} {If the derivation of $\AT{\sv}{\jndx}$
is infinite, then }the following condition holds:\\
\begin{tabular}{ll}
\refToRule{at-$\infty$}&$\AT{\x}{\indx+k}\premisestar\AT{\mapEnv(\x)}{\indx}\premise\AT{\x}{\indx}\premisestar\AT{\sv}{\jndx}$\\
&for some $\x\in\dom{\mapEnv}$, 
and $\indx, k \geq 0$.
\end{tabular}
\end{enumerate}
\end{lemma}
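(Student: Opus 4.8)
The plan is to prove the two statements separately, but with essentially the same structure, since both characterize "bad behaviour'' of a (possibly infinite) well-definedness or access computation by exhibiting a cyclic pattern involving some $\x \in \dom{\mapEnv}$. The key observation driving both proofs is that the rules in \refToFigure{op-wd} and the rules for $\AT{\sv}{\indx}$ are \emph{syntax-directed on the stream value $\sv$}, and that $\sv$ strictly decreases (in the subterm order) in every premise \emph{except} when we unfold a variable via \refToRule{wd-var}/\refToRule{wd-delay} (resp.\ \refToRule{at-var}), replacing $\x$ by $\mapEnv(\x)$. Hence any branch of a derivation that fails to terminate, or any point where no rule applies, must pass through infinitely many (resp.\ at least one problematic) such variable-unfolding steps, and since $\dom{\mapEnv}$ is finite some variable $\x$ must recur.

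For part~\eqref{i}, I would argue both directions. For the ``if'' direction, assume \refToRule{wd-stuck} holds for some $\x$, $\map$, $\map'$ with $\x\notin\dom\map$ and $\map'(\x)\le 0$. Consider any attempted derivation of $\WD{\sv}{\emptyMap}$; following the forced (syntax-directed) rule choices down to $\WD{\x}{\map}$, rule \refToRule{wd-var} applies (since $\x\in\dom\mapEnv$, $\x\notin\dom\map$) and forces us to $\WD{\mapEnv(\x)}{\Update{\map}{\x}{0}}$; continuing along the chain $\premisestar$ we reach $\WD{\x}{\map'}$, and now no rule applies: \refToRule{wd-var} needs $\x\notin\dom{\map'}$ which fails, \refToRule{wd-corec} and \refToRule{wd-delay} need $\map'(\x)>0$ which fails since $\map'(\x)\le 0$, and \refToRule{wd-fv} needs $\x\notin\dom\mapEnv$ which fails. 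So every branch gets stuck and no derivation exists. For the ``only if'' direction, suppose $\WD{\sv}{\emptyMap}$ has no derivation. Since the rules are syntax-directed and, along any branch, $\sv$ only fails to decrease at \refToRule{wd-var}/\refToRule{wd-delay}/\refToRule{wd-corec} steps — and \refToRule{wd-corec} and \refToRule{wd-fv} are leaves — an attempted bottom-up construction either terminates successfully (contradiction) or reaches a judgment $\WD{\x}{\map'}$ to which no rule applies; the only way this happens with $\x\in\dom\mapEnv$ is $\x\in\dom{\map'}$ and $\map'(\x)\le 0$ (if $\x\notin\dom{\map'}$, \refToRule{wd-var} applies; if $\map'(\x)>0$, \refToRule{wd-corec} applies). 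Tracing back up from this stuck leaf, the last time $\x$ entered $\dom{}$ of the counter map was via a \refToRule{wd-var} step applied to some $\WD{\x}{\map}$ with $\x\notin\dom\map$, yielding precisely the shape \refToRule{wd-stuck}. I would also need to rule out the degenerate possibility that the construction ``diverges'' without ever getting stuck; but the counter maps have no a priori bound, so one must instead observe that because the only non-decreasing steps revisit finitely many variables, an infinite branch would revisit some $\x$ with a counter map already containing $\x$ — and then \refToRule{wd-corec} or \refToRule{wd-delay} offers a terminating alternative, while the only rule that is \emph{forced} and keeps going is \refToRule{wd-var}, which requires $\x\notin\dom{}$, a contradiction after the first revisit. (More carefully: once $\x\in\dom{}$ the only applicable rules at $\WD{\x}{\cdot}$ are \refToRule{wd-corec}, \refToRule{wd-delay}, or failure, all of which are either leaves or one-step, so branches cannot be infinite; hence ``no derivation'' forces ``stuck''.)

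For part~\eqref{ii}, the argument is the forward direction only and is simpler because the $\AT{\cdot}{\cdot}$ rules have no failure-vs-divergence subtlety of the same kind: if the derivation of $\AT{\sv}{\jndx}$ is infinite, then since in every rule except \refToRule{at-var} the stream value strictly decreases, the infinite branch must use \refToRule{at-var} infinitely often; as $\dom\mapEnv$ is finite, some variable $\x\in\dom\mapEnv$ is unfolded infinitely often, so along the branch we have $\cdots\premisestar\AT{\x}{\indx}\premisestar\AT{\mapEnv(\x)}{\indx'}\premise\AT{\x}{\indx'}\premisestar\cdots$ for two consecutive occurrences, and between two consecutive unfoldings of the \emph{same} $\x$ the index changes from $\indx'$ to $\indx$ by the net effect of the \refToRule{at-cons-succ}, \refToRule{at-tail}, \refToRule{at-$\IL$-even}, \refToRule{at-$\IL$-odd} steps traversed; picking the occurrence where the index is smallest (or just relabelling), one obtains the claimed shape $\AT{\x}{\indx+k}\premisestar\AT{\mapEnv(\x)}{\indx}\premise\AT{\x}{\indx}\premisestar\AT{\sv}{\jndx}$ with $\indx,k\ge 0$. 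The only mild care needed is to ensure $k \ge 0$: one takes two occurrences of $\x$ on the branch such that the inner one carries the minimal index among all occurrences of $\x$, so that the outer (earlier, closer to the root) one carries an index $\ge$ it.

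The main obstacle I anticipate is the ``only if'' direction of part~\eqref{i}: making rigorous the claim that ``no derivation exists'' implies ``a stuck configuration of the specific shape \refToRule{wd-stuck} is reached'', i.e.\ excluding the possibility of an \emph{infinite} failed search. This requires the careful termination-style observation above — that once a variable has been recorded in the counter map, the only rules available at that variable are leaves (\refToRule{wd-corec}) or the single-step \refToRule{wd-delay}, so no branch of any attempted derivation can be infinite — combined with König's lemma (finite branching) to conclude that a non-derivable judgment must have a finitely-reachable stuck leaf, which by the syntax-directedness and the form of the side conditions is exactly of shape \refToRule{wd-stuck}. Everything else is a routine case analysis on the last applied rule.
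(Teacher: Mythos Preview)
Your overall approach matches the paper's: both parts hinge on syntax-directedness (so that the only non-structural steps are variable unfoldings) together with finiteness of $\dom{\mapEnv}$, forcing some variable to recur. The paper's proof is much terser but follows the same outline.

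There is, however, a concrete error in your termination argument for part~(\ref{i}). You claim that ``once $\x\in\dom{\map}$ the only applicable rules at $\WD{\x}{\cdot}$ are \refToRule{wd-corec}, \refToRule{wd-delay}, or failure, all of which are either leaves or one-step, so branches cannot be infinite''. This is false: \refToRule{wd-delay} has premise $\WD{\mapEnv(\x)}{\Update{\map}{\x}{0}}$ and can be applied unboundedly often. For instance, with $\mapEnv=\{\x\mapsto\cons{0}{\x}\}$, from $\WD{\x}{\emptyset}$ one can cycle \refToRule{wd-var}/\refToRule{wd-cons}/\refToRule{wd-delay}/\refToRule{wd-cons}/\refToRule{wd-delay}/\ldots\ forever. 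The clean fix---and this is what the paper does, invoking \refToLemma{remove-delay}---is to observe that \refToRule{wd-delay} is irrelevant for derivability (its side condition coincides with that of the leaf rule \refToRule{wd-corec}), so one may carry out the search in the system \emph{without} \refToRule{wd-delay}; there each variable enters $\dom{\map}$ at most once along a branch and every subsequent encounter is a leaf or stuck, giving the termination you need.

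There is also an orientation slip in part~(\ref{ii}): you write that the \emph{inner} occurrence (farther from the root) should carry the minimal index so that the \emph{outer} one has index $\geq$ it, but the lemma requires the opposite---the occurrence closer to the root is $\AT{\x}{\indx}$ and the one farther away is $\AT{\x}{\indx+k}$ with $k\geq 0$. The paper's one-line argument (``the indexes cannot be always decreasing'') gives this directly, since a strictly decreasing sequence of naturals is finite. Your ``minimal index'' idea also works, but the minimum must sit at the \emph{outer} position, with any later occurrence taken as the inner one.
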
 

{\begin{lemma}\label{lemma:second-occurrence} 
If $\AT{\x}{\indx'}\premisestar\AT{\sv'}{\indx}$, and $\WD{\sv'}{\map}\premisestar\WD{\sv}{\emptyset}$ with $\WD{\sv}{\emptyset}$ derivable, and $\x\in\dom{\map}$, then
\begin{quote}
 ${\WD{\x}{\map'}\premisestar\WD{\sv'}{\map}}$ for some $\map'$ such that $\map'(\x)-\map(\x) \leq \indx-\indx'$.
 \end{quote}
\end{lemma}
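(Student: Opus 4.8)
The plan is to prove Lemma~\ref{lemma:second-occurrence} by induction on the length of the derivation $\AT{\x}{\indx'}\premisestar\AT{\sv'}{\indx}$ — equivalently, by induction on the structure of how $\sv'$ sits inside the evaluation of $\sv$. The intuition is that the $\map$-values in the well-definedness derivation track exactly the net number of constructors/right-interleave-operands minus tails that have been traversed, while the index shift $\indx-\indx'$ in the $\atfun$ derivation tracks the same net quantity along the corresponding path. So the two derivations, which explore the same stream value $\sv'$ in lock-step, must agree up to this bound. The key is that both the rules in Figure~\ref{fig:stream-sem} for $\atfun$ and the rules in Figure~\ref{fig:op-wd} for $\mathsf{wd}$ are essentially structural on the stream value, and the side bookkeeping ($\indx$ versus $\map$) moves in parallel.

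**The inductive step, case by case.** First I would handle the base case, where $\AT{\sv'}{\indx}$ is literally $\AT{\sv}{\emptyset}$: here one traces back the $\mathsf{wd}$-derivation from $\WD{\sv}{\emptyset}$ following how $\sv'$ appears, but since the relation $\premisestar$ here is reflexive and $\sv'=\sv$, the hypothesis $\x\in\dom\map=\dom\emptyset$ is vacuous, so nothing to prove. For the inductive step, I case-split on the last rule used to reach $\AT{\sv'}{\indx}$ from its parent $\AT{\sv''}{\indx''}$ in the $\atfun$-derivation, i.e. on the shape of $\sv''$. If $\sv''=\cons{\nv}{\sv'}$ then $\indx''=\indx+1$ (rule \refToRule{at-cons-succ}) and the matching $\mathsf{wd}$-step is \refToRule{wd-cons}, which sends $\map''$ to $\incrMap''=\map$, so $\map(\x)=\map''(\x)+1$ and $\indx''-\indx'=\indx+1-\indx'$; the induction hypothesis on $\sv''$ gives $\map'(\x)-\map''(\x)\leq\indx''-\indx'$, i.e. $\map'(\x)-(\map(\x)-1)\leq(\indx-\indx')+1$, which rearranges to exactly the claim. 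The \refToRule{at-tail}/\refToRule{wd-tail} case is symmetric with a $-1$ on both sides; \refToRule{at-nop}/\refToRule{wd-nop} leaves both $\map$ and $\indx$ unchanged; \refToRule{at-$\IL$-even}/\refToRule{wd-$\IL$} (left premise) leaves $\map$ unchanged and has $\indx''=2\indx\geq\indx$; and \refToRule{at-$\IL$-odd}/\refToRule{wd-$\IL$} (right premise) has $\map=\incrMap''$ and $\indx''=2\indx+1$, so $\indx''-\indx=\indx+1\geq 1=\map(\x)-\map''(\x)$ when $\indx\geq 0$, and one checks the inequality propagates. The crucial arithmetic observation, foreshadowed in the paper's discussion of rule \refToRule{wd-$\IL$}, is that on the right branch of interleave the index genuinely grows by at least as much as the $\map$-counter, which is why the $+1$ increment in \refToRule{wd-$\IL$} is safe.

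**The variable case and the $\premise$-boundary.** The remaining and most delicate case is when $\sv''=\x'$ is a variable, handled in the $\atfun$-derivation by \refToRule{at-var} (which replaces $\x'$ by $\mapEnv(\x')$, leaving $\indx$ fixed) and in the $\mathsf{wd}$-derivation either by \refToRule{wd-var} (first occurrence, $\x'\not\in\dom\map''$, sets $\Update{\map''}{\x'}{0}$) or by \refToRule{wd-corec}/\refToRule{wd-delay} (second occurrence). Here I must be careful about when $\x$ — the distinguished variable of the statement — equals $\x'$ versus when it is some other variable already in $\dom{\map''}$. If $\x\neq\x'$, the bookkeeping for $\x$ is untouched by the step and the induction hypothesis passes through directly. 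If $\x=\x'$: since the $\mathsf{wd}$-derivation we are given has $\WD{\sv}{\emptyset}$ derivable, it cannot be the \refToRule{wd-corec} branch that sits on the path above $\WD{\sv'}{\map}$ with $\map(\x)>0$ unless the subtree above terminates — I would use the overall derivability of $\WD{\sv}{\emptyset}$ together with Lemma~\ref{lemma:basics}\eqref{i} (the characterization of non-derivability via \refToRule{wd-stuck}) to rule out the bad configuration, and conclude that at the step where $\x$ is introduced we have $\map(\x)=0$, matching the fact that \refToRule{at-var} does not change $\indx$. I expect this interaction — lining up "first occurrence of $\x$ in the $\mathsf{wd}$-tree" with "first traversal of $\x$ in the $\atfun$-tree" and showing the offsets are consistent at that synchronization point — to be the main obstacle, since it requires simultaneously threading the structural induction and invoking the global well-definedness via Lemma~\ref{lemma:basics}. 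Everything else is the routine $+1/-1/\text{unchanged}$ arithmetic of the previous paragraph.
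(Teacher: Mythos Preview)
Your overall plan---induction on the $\atfun$-path, case analysis on the stream operators, and tracking the arithmetic relating the $\map$-shift to the index shift---matches the paper's. The per-case arithmetic you sketch is also essentially correct. But the \emph{direction} of your inductive step is inverted, and this breaks the argument.

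You say you induct on the length of $\AT{\x}{\indx'}\premisestar\AT{\sv'}{\indx}$, yet in the inductive step you case-split on the shape of $\sv''$ where $\AT{\sv'}{\indx}\premise\AT{\sv''}{\indx''}$ (e.g.\ $\sv''=\cons{\nv}{\sv'}$). But $\sv''$ sits \emph{below} $\sv'$ in the proof tree and is therefore \emph{outside} the given path $\AT{\x}{\indx'}\premisestar\AT{\sv'}{\indx}$; the lemma gives you no information about what lies below $\sv'$. When you then invoke ``the induction hypothesis on $\sv''$'' you are appealing to a strictly \emph{longer} $\atfun$-path, so the IH is not available. Your base case confirms the confusion: you take it to be $\sv'=\sv$, $\map=\emptyset$ (length-zero $\mathsf{wd}$-path), whereas the base of an induction on the $\atfun$-path is $\sv'=\x$, $\indx'=\indx$, with the trivial reflexive $\mathsf{wd}$-path $\WD{\x}{\map}\premisestar\WD{\x}{\map}$ and $\map'(\x)-\map(\x)=0=\indx-\indx'$.

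The fix is exactly what the paper does: case on the rule whose \emph{conclusion} is $\AT{\sv'}{\indx}$, i.e.\ on the shape of $\sv'$ itself (so $\sv'=\cons{\nv}{\sv_0}$, $\sv'=\tail{\sv_0}$, $\sv'=\y$, etc.). The premise of that rule lies one step \emph{above} $\sv'$, hence inside the given $\atfun$-path and strictly closer to $\x$; you build the matching $\mathsf{wd}$-step \emph{upward} from $\WD{\sv'}{\map}$ and apply the IH there. Once the direction is corrected, your arithmetic and your handling of the variable case (distinguishing $\y\neq\x$ versus $\y=\x$, using derivability of $\WD{\sv}{\emptyset}$ via \refToLemma{basics}(\ref{i}) to force $\map(\y)>0$, and applying \refToRule{wd-delay}) line up with the paper's proof.
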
}

\begin{proof}
The proof is by induction on the {length} of the path in ${\AT{\x}{\indx'}\premisestar\AT{\sv'}{\indx}}$. 
\begin{description}
\item[Base] The {length} of the path is $0$, hence we have $\AT{\x}{\indx}\premisestar\AT{\x}{\indx}$. We also have  $\WD{\x}{\map}\premisestar\WD{\x}{\map}$, and we get the thesis since $\map(\x)=\map(\x)+\indx-\indx$.
\item[Inductive step] By cases on the rule applied to derive $\AT{\sv'}{\indx}$.
\begin{description}
\item[\refToRule{at-var}] {We have $\AT{\x}{\indx'}\premisestar\AT{\mapEnv(\y)}{\indx}\premise\AT{\y}{\indx}$. 
There are two cases:
\begin{itemize}
\item If $\y\not\in\dom{\map}$ (hence $\y\neq\x$), we have $\WD{\mapEnv(\y)}{\Update{\map}{\y}{0}}\premise\WD{\y}{\map}$ by rule \refToRule{wd-var}, the premise is derivable, hence by inductive hypothesis we have  ${\WD{\x}{\map'}\premisestar\WD{\mapEnv(\y)}{\Update{\map}{\y}{0}}}$, and ${\map'(\x)\leq\Update{\map}{\y}{0}(\x)+\indx-\indx'}=\map(\x)+\indx-\indx'$, hence we get the thesis.
\item If $\y\in\dom{\map}$, then it is necessarily $\map(\y)>0$, \EZ{otherwise, by \refToLemma{basics}-(1), $\WD{\sv}{\emptyset}$ would not be derivable}. Hence, we have \linebreak ${\WD{\mapEnv(\y)}{\Update{\map}{\y}{0}}\premise\WD{\y}{\map}}$ by rule \refToRule{wd-delay}, hence by inductive hypothesis we have  ${\WD{\x}{\map'}\premisestar\WD{\mapEnv(\y)}{\Update{\map}{\y}{0}}}$, and ${\map'(\x)\leq\Update{\map}{\y}{0}(\x)+\indx-\indx'}$. There are two subcases:
\begin{itemize}
\item If $\y\neq\x$, then $\Update{\map}{\y}{0}(\x)=\map(\x)$, and we get the thesis as in the previous case.
\item If $\y=\x$, then $\Update{\map}{\x}{0}(\x)=0$, hence ${\map'(\x)\leq\indx-\indx'\leq\map(\x)+\indx-\indx'}$, since $\map(\x)>0$.
\end{itemize}
\end{itemize}}
\item[\refToRule{at-cons-0}] Empty case, since the derivation for $\AT{\cons{\nv}{\sv}}{0}$ does not contain a node $\AT{\x}{\indx'}$.
\item[\refToRule{at-cons{-succ}}] We have $\AT{\cons{\nv}{\sv}}{\indx}$, and $\AT{\x}{\indx'}\premisestar\AT{\sv}{\indx-1}$. 
Moreover, we can derive $\WD{\cons{\nv}{\sv}}{\map}$ by rule \refToRule{wd-cons}, and by inductive hypothesis we also have  $\WD{\x}{\map'}\premisestar\WD{\sv}{\incrMap}$, with $\map'(\x)\leq\incrMap(\x)+(\indx-1)-\indx'$, hence we get the thesis.
\item[\refToRule{at-tail}] This case is symmetric to the previous one.
\item[\refToRule{at-nop}] We have $\AT{\PW{\sv_1}{\pwop}{\sv_2}}{\indx}$, and either $\AT{\x}{\indx'}\premisestar\AT{\sv_1}{\indx}$, or \linebreak ${\AT{\x}{\indx'}\premisestar\AT{\sv_2}{\indx}}$. Assume the first case holds, the other is analogous.
Moreover, we can derive $\WD{\PW{\sv_1}{\pwop}{\sv_2}}{\map}$ by rule \refToRule{wd-nop}, and by inductive hypothesis we also have  $\WD{\x}{\map'}\premisestar\WD{\sv_1}{\map}$, with $\map'(\x)\leq\map(\x)+\indx-\indx'$, hence we get the thesis.

\item[\refToRule{{at-}$\IL$-even}] We have $\AT{\sv_1\IL\sv_2}{2\indx}$ and $\AT{\x}{\indx'}\premisestar\AT{\sv_1}{\indx}$. By inductive hypothesis, we have $\WD{\x}{\map'}\premisestar\WD{\sv_1}{\map}$, with $\map'(\x)\leq \map(\x)+\indx-\indx'$. Moreover, $\WD{\sv_1}{\map}\premise{}\WD{\PW{\sv_1}{\IL}{\sv_2}}{\map}$ holds by rule \mbox{\refToRule{wd-$\IL$}}, hence we have $\WD{\x}{\map'}\premisestar\WD{\PW{\sv_1}{\IL}{\sv_2}}{\map}$ with ${\map'(\x)\leq \map(\x)+2\indx-\indx'}$ and, thus, the thesis. 

\item[\refToRule{{at-}$\IL$-odd}] We have $\AT{\sv_1\IL\sv_2}{2\indx+1}$ and $\AT{\x}{\indx'}\premisestar\AT{\sv_2}{\indx}$. By inductive hypothesis, we have $\WD{\x}{\map'}\premisestar\WD{\sv_2}{\incrMap}$, with $\map'(\x)\leq \incrMap(\x)+\indx-\indx'$. Moreover, $\WD{\sv_2}{\map}\premise{}\WD{\PW{\sv_1}{\IL}{\sv_2}}{\map}$ holds by rule \mbox{\refToRule{wd-$\IL$}}, hence we have $\WD{\x}{\map'}\premisestar\WD{\PW{\sv_1}{\IL}{\sv_2}}{\map}$ with $\map'(\x)\leq \map(\x)+2\indx+1-\indx'$ and, thus, the thesis.
\end{description}
\end{description}
\end{proof}

{\begin{lemma}\label{lemma:first-occurrence} 
If $\AT{\x}{\indx'}\premisestar\AT{\sv}{\indx}$, and $\WD{\sv}{\emptyset}$ derivable, then 
\begin{quote}
${\WD{\x}{\map}\premisestar\WD{\sv}{\emptyset}}$ for some $\map$ such that $\x\not\in\dom{\map}$.
\end{quote}
\end{lemma}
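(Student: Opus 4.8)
The plan is to prove \refToLemma{first-occurrence} by first strengthening the statement, then arguing by induction on the length of the path ${\AT{\x}{\indx'}\premisestar\AT{\sv}{\indx}}$, following the same overall template as the proof of \refToLemma{second-occurrence}. A strengthening is needed because of rule \refToRule{at-var}: when the access derivation descends from $\AT{\y}{\indx}$ into $\AT{\mapEnv(\y)}{\indx}$, the matching step on the well-definedness side is \refToRule{wd-var}, which enlarges the auxiliary map; so the induction cannot stay within judgments whose map is empty, and it must keep a handle on the ``ambient'' root judgment. Concretely, I would prove: \emph{if $\WD{\sv}{\emptyMap}$ is derivable, $\AT{\x}{\indx'}\premisestar\AT{\sv'}{\indx}$, $\WD{\sv'}{\map}\premisestar\WD{\sv}{\emptyMap}$, and $\x\notin\dom\map$, then $\WD{\x}{\map'}\premisestar\WD{\sv}{\emptyMap}$ for some $\map'$ with $\x\notin\dom{\map'}$;} \refToLemma{first-occurrence} is the instance $\sv'=\sv$, $\map=\emptyMap$. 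In contrast to \refToLemma{second-occurrence}, only the \emph{domains} of the maps matter here (which variable has already been entered), not their precise values, so the structural cases are lighter.

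I would first record two elementary facts, both by inspection of \refToFigure{op-wd}. (i) Along a step $\WD{\sv''}{\map''}\premise\WD{\sv'}{\map}$ the domain can only grow, $\dom\map\subseteq\dom{\map''}$, and it grows---by exactly the variable being defined---only when the step is \refToRule{wd-var}; hence, whenever $\WD{\sv'}{\map}\premisestar\WD{\sv}{\emptyMap}$ and $\y\in\dom\map$, the chain factors through the point where $\y$ is introduced, namely ${\WD{\sv'}{\map}\premisestar\WD{\mapEnv(\y)}{\Update{\map_1}{\y}{0}}\premise\WD{\y}{\map_1}\premisestar\WD{\sv}{\emptyMap}}$ with $\y\notin\dom{\map_1}$. (ii) For such a $\y$, derivability of $\WD{\sv}{\emptyMap}$ together with \refToLemma{basics}-(1) forces $\map(\y)>0$: otherwise the displayed chain would instantiate the \refToRule{wd-stuck} pattern, contradicting derivability.

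For the induction, the base case ($\sv'=\x$, $\indx=\indx'$) is immediate with $\map'=\map$. In the inductive step I would case-split on the last rule of the (deterministic) derivation of $\AT{\sv'}{\indx}$. The case \refToRule{at-cons-0} is vacuous (no premise). In each structural case---\refToRule{at-cons-succ}, \refToRule{at-tail}, \refToRule{at-nop}, \refToRule{at-$\IL$-even}, \refToRule{at-$\IL$-odd}---the shape of $\sv'$ forces the matching well-definedness rule (\refToRule{wd-cons}, \refToRule{wd-tail}, \refToRule{wd-nop}, or \refToRule{wd-$\IL$}), whose relevant premise $\WD{\sv''}{\map''}$ then satisfies $\WD{\sv''}{\map''}\premisestar\WD{\sv}{\emptyMap}$ and $\x\notin\dom{\map''}$ (since $\dom{\map''}=\dom\map$: the map is only shifted by $\pm1$ or left unchanged); the inductive hypothesis on the shorter access path then gives the claim. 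The crux is \refToRule{at-var}: $\sv'=\y\in\dom\mapEnv$ and $\AT{\x}{\indx'}\premisestar\AT{\mapEnv(\y)}{\indx}\premise\AT{\y}{\indx}$. If $\x=\y$, conclude with $\map'=\map$. Otherwise $\x\neq\y$, and I would show $\WD{\mapEnv(\y)}{\Update{\map}{\y}{0}}\premisestar\WD{\sv}{\emptyMap}$: if $\y\notin\dom\map$ this holds by \refToRule{wd-var}; if $\y\in\dom\map$ then $\map(\y)>0$ by fact (ii), so it holds by \refToRule{wd-delay}. Using \refToRule{wd-delay}, not \refToRule{wd-corec}, is essential here: \refToRule{wd-corec} is a leaf and would leave $\mapEnv(\y)$ unexplored, whereas the access derivation does descend into $\mapEnv(\y)$. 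Since $\x\neq\y$ and $\x\notin\dom\map$, also $\x\notin\dom{\Update{\map}{\y}{0}}$, so the inductive hypothesis applied to the shorter path $\AT{\x}{\indx'}\premisestar\AT{\mapEnv(\y)}{\indx}$ yields the required $\WD{\x}{\map'}\premisestar\WD{\sv}{\emptyMap}$ with $\x\notin\dom{\map'}$.

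I expect the whole difficulty to be concentrated in the \refToRule{at-var} case: first, spotting that the statement must be generalized over the map and over the ambient root judgment (so that the inductive hypothesis becomes applicable at all), and second, the bookkeeping behind switching to \refToRule{wd-delay} at an already-visited variable, which rests on \refToLemma{basics}-(1) to exclude a stuck configuration. Everything else is routine and parallels the corresponding cases in the proof of \refToLemma{second-occurrence}.
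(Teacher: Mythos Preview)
Your proposal is correct and follows exactly the approach the paper indicates: the paper's proof is the single line ``Easy variant of the proof of \refToLemma{second-occurrence}'', and you have spelled out that variant in full, generalizing the statement to an intermediate $\sv'$ and map $\map$ with $\x\notin\dom\map$, and running the same induction on the length of the access path. Your handling of the \refToRule{at-var} case---using \refToRule{wd-delay} when $\y\in\dom\map$ after establishing $\map(\y)>0$ via \refToLemma{basics}-(1)---mirrors precisely the corresponding case in the proof of \refToLemma{second-occurrence}; the only simplification, as you note, is that here only the domain of the map matters, not its numeric values.
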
}
\begin{proof}
Easy variant of the proof of \refToLemma{second-occurrence}.
\end{proof}

\begin{theorem}\label{theo:iff}
{If $\WD{\sv}{\emptyMap}$ has a derivation then, for all $\jndx$, $\AT{\sv}{\jndx}$ either has no derivation or a finite derivation.}\EZComm{$\AT{\sv}{\jndx}$ potrebbe fallire in caso di ambiente aperto}
\end{theorem}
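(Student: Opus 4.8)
The plan is to argue by contradiction, turning a hypothetical diverging access derivation into a violation of well-definedness, via the two "stuck" characterisations of \refToLemma{basics}. Suppose $\WD{\sv}{\emptyMap}$ is derivable, and suppose towards a contradiction that for some index $\jndx$ the derivation of $\AT{\sv}{\jndx}$ is infinite (if $\AT{\sv}{\jndx}$ has no derivation the statement holds vacuously, and a derivation that exists is either finite — in which case we are done — or infinite). By \refToLemma{basics}-(\ref{ii}) the infinite derivation forces the pattern \refToRule{at-$\infty$}: there are $\x\in\dom{\mapEnv}$ and $\indx,k\geq 0$ with
$$\AT{\x}{\indx+k}\premisestar\AT{\mapEnv(\x)}{\indx}\premise\AT{\x}{\indx}\premisestar\AT{\sv}{\jndx}.$$
The goal is then to exhibit the pattern \refToRule{wd-stuck} inside the (assumed) derivation of $\WD{\sv}{\emptyMap}$, namely a node $\WD{\x}{\map'}\premisestar \WD{\mapEnv(\x)}{\Update{\map}{\x}{0}}\premise\WD{\x}{\map}\premisestar\WD{\sv}{\emptyMap}$ with $\x\notin\dom{\map}$ and $\map'(\x)\leq 0$; once this is established, \refToLemma{basics}-(\ref{i}) says $\WD{\sv}{\emptyMap}$ has \emph{no} derivation, a contradiction.

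The two segments of the access path are matched against the $\mathsf{wd}$-tree using the two occurrence lemmas. First, applying \refToLemma{first-occurrence} to the lower segment $\AT{\x}{\indx}\premisestar\AT{\sv}{\jndx}$ (using derivability of $\WD{\sv}{\emptyMap}$) gives a node $\WD{\x}{\map}\premisestar\WD{\sv}{\emptyMap}$ with $\x\notin\dom{\map}$. Since $\x\in\dom{\mapEnv}$ and $\x\notin\dom{\map}$, the only rule that can conclude $\WD{\x}{\map}$ is \refToRule{wd-var} (rules \refToRule{wd-corec} and \refToRule{wd-delay} need $\x\in\dom{\map}$, and \refToRule{wd-fv} needs $\x\notin\dom{\mapEnv}$), so its premise is exactly $\WD{\mapEnv(\x)}{\Update{\map}{\x}{0}}$, and it lies on the path down to $\WD{\sv}{\emptyMap}$. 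Then apply \refToLemma{second-occurrence} to the upper segment $\AT{\x}{\indx+k}\premisestar\AT{\mapEnv(\x)}{\indx}$ together with $\WD{\mapEnv(\x)}{\Update{\map}{\x}{0}}\premisestar\WD{\sv}{\emptyMap}$ and the fact that $\x\in\dom{\Update{\map}{\x}{0}}$: this yields $\WD{\x}{\map'}\premisestar\WD{\mapEnv(\x)}{\Update{\map}{\x}{0}}$ with $\map'(\x)-\Update{\map}{\x}{0}(\x)\leq \indx-(\indx+k)$, i.e.\ $\map'(\x)\leq -k\leq 0$. Chaining the three pieces produces precisely the \refToRule{wd-stuck} configuration, and \refToLemma{basics}-(\ref{i}) delivers the contradiction.

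The step I expect to need the most care is the bookkeeping of counters and occurrences when gluing the two lemma applications together: one must check that the map handed to \refToLemma{second-occurrence} at the start of the upper segment is genuinely $\Update{\map}{\x}{0}$ coming from the \refToRule{wd-var} step identified by \refToLemma{first-occurrence}, that it is the \emph{same} occurrence of $\x$ being tracked on both the $\atfun$-side and the $\mathsf{wd}$-side, and that the resulting inequality $\map'(\x)\leq -k$ really matches the side condition $\map'(\x)\leq 0$ of \refToRule{wd-stuck} — which works exactly because $k\geq 0$. Everything else is routine application of the already-proved lemmas.
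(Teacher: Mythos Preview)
Your proof is correct and follows essentially the same route as the paper: contradiction via \refToLemma{basics}-(\ref{ii}) to obtain the \refToRule{at-$\infty$} pattern, then \refToLemma{first-occurrence} on the lower segment, rule \refToRule{wd-var} to unfold $\x$, and \refToLemma{second-occurrence} on the upper segment to get $\map'(\x)\leq -k\leq 0$, yielding \refToRule{wd-stuck} and a contradiction by \refToLemma{basics}-(\ref{i}). Your extra justification of why \refToRule{wd-var} is the only applicable rule at that point is a nice detail the paper leaves implicit.
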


\begin{proof}
{Assume by contradiction that $\AT{\sv}{\jndx}$ has an infinite derivation for some $\jndx$, and $\WD{\sv}{\emptyMap}$ is derivable.}
By \refToLemma{basics}-({\ref{ii}}), the following condition holds:
\begin{quote}
\begin{tabular}{ll}
\refToRule{at-$\infty$}&$\AT{\x}{\indx+k}\premisestar\AT{\mapEnv(\x)}{\indx}\premise\AT{\x}{\indx}\premisestar\AT{\sv}{\jndx}$\\
&for some $\x\in\dom{\mapEnv}$, 
and $\indx, k \geq 0$.
\end{tabular}
\end{quote}
Then, starting from the right, by \refToLemma{first-occurrence} we have ${\WD{\x}{\map}\premisestar\WD{\sv}{\emptyMap}}$ for some $\map$ such that $\x\not\in\dom{\map}$; by rule \refToRule{wd-var} ${\WD{\mapEnv(\x)}{\Update{\map}{\x}{0}}\premise\WD{\x}{\map}}$, and finally by \refToLemma{second-occurrence} we have:
\begin{quote}
\begin{tabular}{ll}
\refToRule{wd-stuck}&$\WD{\x}{\map'}\premisestar \WD{\mapEnv(\x)}{\Update{\map}{\x}{0}}\premise\WD{\x}{\map}\premisestar\WD{\sv}{\emptyMap}$\\
&for some $\x\in\dom{\mapEnv}$, 
and $\map',\map$ s.t.\ $\x\not\in\dom{\map}, \map'(\x){\leq {- k}}\leq 0$.
\end{tabular}
\end{quote}hence this is absurd {by \refToLemma{basics}-(\ref{i})}.
\end{proof}

\section{An optimized algorithm for well-definedness}\label{sect:opt-wd}

{The  definition of well-definedness in \refToFigure{op-wd} can be easily turned into an algorithm, since, omitting rule \refToRule{wd-delay}, at each point in the derivation there is at most one applicable rule.} Now we will discuss its time complexity, assuming that insertion, update and lookup are performed in constant time. 
It is easy to see that when we find a stream constructor we need to perform an update of the map $\mapEnv$ for every variable
  in its domain. If we  consider the following environment:
 \begin{small}
 \begin{quote}
$\mapEnv=(x_0,\{x_0\mapsto 0:x_1,x_1\mapsto 0:x_2, x_2\mapsto 0:x_3, x_3\mapsto 0:x_4,\ \ldots\ , x_n\mapsto 0:x_0\})$
\end{quote}
\end{small}
we get the derivation presented in \refToFigure{wd_worst_case}.
\begin{figure}
\begin{small}
$$\begin{array}{l}
\infeRule{wd-var}
{\infeRule{wd-cons}
{\infeRule{wd-var}
{\infeRule{wd-cons}
{\infeRule{wd-var}
{\infeRule{wd-cons}
{\infeRule{wd-var}
{\vdots}
{\WD{\x_3}{\{x_0\mapsto 3, x_1\mapsto 2, x_2\mapsto 1 \}}}}
{\WD{0:\x_3}{\{x_0\mapsto 2, x_1\mapsto 1, x_2\mapsto 0 \}}}}
{\WD{\x_2}{\{x_0\mapsto 2, x_1\mapsto 1\}}}}
{\WD{0:\x_2}{\{x_0\mapsto 1, x_1\mapsto 0\}}}}
{\WD{\x_1}{\{x_0\mapsto 1\}}}}
{\WD{0:\x_1}{\{x_0\mapsto 0\}}}}
{\WD{\x_0}{\emptyset}}
\end{array}$$
\end{small}
\caption{}\label{fig:wd_worst_case}
\end{figure}
Here, the number of constructor occurrences for which we have to perform an update of all variables in the domain of the map is
  linearly proportional to the number $N$ of nodes in the derivation tree; since the domain is increased by one for each new variable, and the total number of variables is again linearly proportional to $N$, it is easy to see that we have a time complexity quadratic in $N$.

We propose now an optimized version of the well-definedness check, having a time complexity of $O(N \log N)$.  
On the other hand, the version we provided in \refToFigure{op-wd} is more abstract, hence more convenient for the proof of \refToTheorem{iff}.

In the optimized version, given in \refToFigure{opt-wd}, the judgment has shape $\owd{\sv}{\optmap}{\p}$, where $\p$ represents a path in the proof tree where each element corresponds to a visit of either the constructor or the right operand of interleave (value 1 for both) or the tail operator (value -1), and $\optmap$ associates with each variable an index (starting from 0) corresponding to the point in the path $\p$ where
the variable was found the first time. The only operation performed on a path $\p$ is the addition $\p\appOp b$ of an element $b$ at the end.

\begin{figure}[h]
\begin{small}
\begin{grammatica}
\produzione{\optmap}{{\x_1\mapsto\indx_1 \ldots \x_n\mapsto\indx_k} \Space (\indx\geq0)}{map from variables to indexes}\\
\produzione{\p}{b_1 b_2 \ldots b_n}{sequence of either 1 or -1}
\\[2ex]
\end{grammatica}
\\
\hrule 
$\begin{array}{l}
\\
{  
\NamedRule{main}
{\owdUpdated{\x}{\emptyMap}{\emptyPath}}
{\wdOpSem(\mapEnv,\x,\val)}
{\mapEnv'=\Update{\mapEnv}{\x}{\val}}
}
\BigSpace  
\NamedRule{owd-var}
{\owd{\mapEnv(\x)}{\Update{\optmap}{\x}{\indx}}{\p}}
{\owd{\x}{\optmap}{\p} }
{\x\not\in\dom\optmap\\
{\indx}=\len{\p}}
\\[5ex]
\NamedRule{owd-corec}
{}
{\owd{\x}{\optmap}{\p}}
{\x\in\dom\optmap\\
{\sumFrom(\optmap(\x),\p)>0}}
\BigSpace
{\NamedRule{owd-fv}
{}
{\owd{\x}{\optmap}{\p} }
{\x\not\in\dom\mapEnv}}
\\[5ex]
\NamedRule{owd-cons}
{\owd{\sv}{\optmap}{{\p\appOp 1}}}
{\owd{n:\sv}{\optmap}{\p}}
{}
\BigSpace
\NamedRule{owd-tail}
{\owd{\sv}{\optmap}{{\p\appOp (-1)}}}
{\owd{\tail{\sv}}{\optmap}{\p}}
{}
\\[5ex]
\NamedRule{owd-nop}
{\owd{\sv_1}{\optmap}{\p}\Space\owd{\sv_2}{\optmap}{\p}}
{\owd{\sv_1\pwnop\sv_2}{\optmap}{\p}}
{}
\BigSpace
{
\NamedRule{owd-$\IL$}
{\owd{\sv_1}{\optmap}{\p}\Space\owd{\sv_2}{\optmap}{\p\appOp 1}}
{\owd{\sv_1\IL\sv_2}{\optmap}{\p}}
{}
}
\\[5ex]
{\NamedRule{sum-0}{\sumFrom(\p)=n}{\sumFrom(0,\p)=n}{}}
\BigSpace
{\NamedRule{sum-n}{\sumFrom(n-1,{b_2 \ldots b_n})=n'}{\sumFrom(n,{b_1 b_2 \ldots b_n})=n'}{n>0}}
\\[5ex]
{\NamedRule{sum-b}{}{\sumFrom(\emptyPath)=0}{}}
\BigSpace
{\NamedRule{sum-i}{\sumFrom({b_2 \ldots b_n})=n}{\sumFrom({b_1 b_2 \ldots b_n})={b_1}+n}{}}
\end{array}$
\end{small}
\caption{Optimized operational definition of well-definedness}\label{fig:opt-wd}
\end{figure}

In rule \refToRule{main}, both the map and the path are initially empty.
In rule \mbox{\refToRule{owd-var}}, a variable $\x$ defined in the environment, found for the first time, is added in the map with as index the length of the current path.
In rule \refToRule{owd-corec}, when the same variable is found the second time, the auxiliary function $\sumFrom$ checks that more constructors and right operands of interleave have been traversed than tail operators {(see below)}. In rule \refToRule{owd-fv}, a free variable is considered well-defined
{as in the corresponding rule in \refToFigure{op-wd}}. In rules \refToRule{owd-cons}, \refToRule{owd-tail} and \refToRule{op-wd}, the value corresponding to the traversed operator is added {at the end of} the path  (1 for the constructor and the right operand of interleave, -1 for the tail operator). Lastly, rules \refToRule{owd-nop} behaves in a similar way as in \refToFigure{op-wd}.
The semantics of the auxiliary function $\sumFrom$ is straightforward: starting from the point in the path where the variable was found the first time, the sum of all the elements is returned.  

Let us now consider again the example above:
\begin{small}
\begin{quote}
$
\mapEnv=(x_0,\{x_0\mapsto 0:x_1,x_1\mapsto 0:x_2, x_2\mapsto 0:x_3, x_3\mapsto 0:x_4,\ \ldots\ , x_n\mapsto 0:x_0\})
$
\end{quote}
\end{small}
By the new predicate $\mathsf{owd}$, we get a derivation tree of the same shape as in \refToFigure{wd_worst_case}. However, 
$\sumFrom$ is applied to the path $\p$ only at the leaves, and the length of $\p$ is linearly proportional to the depth of the
  derivation tree, which coincides with the number $N$ of nodes in this specific case; hence, the time complexity
  to compute $\sumFrom(0,\p)$ (that is, $\sumFrom(\optmap(x_0),\p)$) is linear in $N$. Finally, since for inner nodes only constant time operations are performed\footnote{This holds for any valid derivation tree and not for this specific case.} (addition at the end of the path, and map insertion and lookup), the overall time complexity is \mbox{linear in $N$}.

As worst case in terms of time complexity for the predicate $\mathsf{owd}$, consider
\begin{small}
\begin{quote}
$
{\mapEnv_\indx=(\x_0,\{\x_0\mapsto 0:\x_1[+]\x_1,\x_1\mapsto 0:\x_2[+]\x_2, \x_2\mapsto 0:\x_3[+]\x_3, \ldots\ , x_\indx\mapsto 0:x_0\})}
$
\end{quote}
\end{small}
The derivation tree for this environment is shown in \refToFigure{owd_worst_case}, where $\optmap_i$ abbreviates the map $\{\x_0\mapsto 0, \x_1\mapsto 1, \ldots, \x_\indx\mapsto \indx\}$.

\begin{figure}
\begin{scriptsize}
$\begin{array}{l}
\infeRule{owd-var}
{\infeRule{owd-cons}
{\infeRule{owd-nop}
{\infeRule{owd-var}
{\infeRule{owd-cons}
{\infeRule{owd-nop}
{\infeRule{owd-var}
{\infeRule{owd-cons}
{\infeRule{owd-nop}
{\vdots}
{\owd{\x_3[+]\x_3}{\optmap_2}{[1,1,1]}}
}
{\owd{0:\x_3[+]\x_3}{\optmap_2}{[1,1]}}
}
{\owd{\x_2}{\optmap_1}{[1,1]}}
&
\infeRule{owd-var}
{\vdots
}
{\owd{\x_2}{\optmap_1}{[1,1]}}}
{\owd{\x_2[+]\x_2}{\optmap_1}{[1,1]}}
}
{\owd{0:\x_2[+]\x_2}{\optmap_1}{[1]}}
}
{\owd{\x_1}{\optmap_0}{[1]}}
&
\infeRule{owd-var}{\vdots}
{\owd{\x_1}{\optmap_0}{[1]}}}
{\owd{\x_1[+]\x_1}{\optmap_0}{[1]}}
}
{\owd{0:\x_1[+]\x_1}{\optmap_0}{\epsilon}}
}
{\owd{\x_0}{\emptyset}{\epsilon}}
\end{array}$
\end{scriptsize}
\caption{}\label{fig:owd_worst_case}
\end{figure}
As {already noticed, for inner nodes only constant time operations are performed, and the length of the paths in the leaves is linearly proportional to the depth $D$ of the derivation tree; however, in this worst case the number of leaves is not just one, but is linearly proportional to the
    total number $N$ of nodes in the derivation tree, hence the depth $D$ is linearly proportional to $\log N$.
    Therefore the overall time complexity is $O(N \cdot D)$, that is, $O(N \cdot \log N)$}. 

%
%

We now show that the optimized version of the judgment has the same semantics as its counterpart presented in \refToSection{wd}.
{First of all we formally state that, in \refToFigure{op-wd}, rule \refToRule{wd-delay} does not affect derivability.
\begin{lemma}\label{lemma:remove-delay}
A judgment $\WD{\sv}{\emptyset}$ has a derivation iff it has a derivation which does not use rule \refToRule{wd-delay}.
\end{lemma}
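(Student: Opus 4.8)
The statement is an equivalence: a derivation of $\WD{\sv}{\emptyMap}$ exists iff one exists that avoids rule \refToRule{wd-delay}. The right-to-left direction is trivial. For left-to-right, I would argue that any application of \refToRule{wd-delay} can be replaced by an application of \refToRule{wd-corec}, which has the same conclusion $\WD{\x}{\map}$ under a weaker side condition (\refToRule{wd-corec} requires only $\x\in\dom\mapEnv$ and $\map(\x)>0$, whereas \refToRule{wd-delay} requires $\map(\x)>0$ and recursively derives $\WD{\mapEnv(\x)}{\Update{\map}{\x}{0}}$). Concretely, given a derivation of $\WD{\sv}{\emptyMap}$, pick a topmost occurrence of \refToRule{wd-delay}, with conclusion $\WD{\x}{\map}$; since it was derivable by \refToRule{wd-delay}, we have $\x\in\dom\mapEnv$ (this is how \refToLemma{basics} already observes the rule can only fire) and $\map(\x)>0$, so \refToRule{wd-corec} applies with the same conclusion, and we simply prune the subtree above it. Iterating removes all uses of \refToRule{wd-delay}. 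This needs a well-foundedness argument to see the iteration terminates; since each rewrite strictly decreases the number of \refToRule{wd-delay} nodes (and pruning never introduces new ones, as \refToRule{wd-corec} is an axiom), induction on that count suffices.

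**Key steps, in order.** First, state the easy direction: a \refToRule{wd-delay}-free derivation is in particular a derivation. Second, prove the converse by induction on the number $n$ of \refToRule{wd-delay} applications in a given derivation $\mathcal D$ of $\WD{\sv}{\emptyMap}$; the base case $n=0$ is immediate. Third, for the inductive step, choose a \refToRule{wd-delay} node whose subderivation contains no further \refToRule{wd-delay} (e.g.\ a topmost one), read off from its side condition that $\x\in\dom\mapEnv$ and $\map(\x)>0$ where $\WD{\x}{\map}$ is its conclusion, and replace the whole subderivation rooted there by the single-node derivation using \refToRule{wd-corec}. Fourth, observe the resulting tree is still a valid derivation of $\WD{\sv}{\emptyMap}$ with $n-1$ uses of \refToRule{wd-delay}, and apply the induction hypothesis.

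**Main obstacle.** The only subtle point is making the elimination order-independent and well-founded: naively replacing an arbitrary \refToRule{wd-delay} node is fine because \refToRule{wd-corec} is an axiom, so the replacement deletes a whole subtree and cannot increase the \refToRule{wd-delay}-count — but one should be slightly careful that the replacement is licensed, i.e.\ that the side condition of \refToRule{wd-corec} ($\x\in\dom\mapEnv$, $\map(\x)>0$) is indeed available. It is, because \refToRule{wd-delay}'s own side condition already gives $\map(\x)>0$, and (as remarked after \refToFigure{op-wd}) the rule is only applicable when $\x\in\dom\mapEnv$. A secondary cosmetic point: one may prefer to phrase the induction as: "there is a \refToRule{wd-delay}-free derivation" follows from repeatedly applying the single-step replacement, each step strictly decreasing a natural-number measure. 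No deeper machinery — the semantics of $\atfun$, \refToLemma{second-occurrence}, etc.\ — is needed here; this lemma is purely about the shape of \refToFigure{op-wd} derivations.
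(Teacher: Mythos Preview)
Your proposal is correct and takes essentially the same approach as the paper: replace each application of \refToRule{wd-delay} by \refToRule{wd-corec}, which has the same conclusion under weaker side conditions (and being an axiom, prunes the subtree above). The paper's proof is a one-sentence version that performs the replacement all at once on the first \refToRule{wd-delay} occurrences along each path, whereas you phrase it as an induction on the number of \refToRule{wd-delay} nodes; the underlying argument is identical.
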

\begin{proof}
The right-to-left implication is obvious. If $\WD{\sv}{\emptyset}$ uses rule \refToRule{wd-delay}, all the (first in their path) occurrences of the rule can be replaced by rule \refToRule{wd-corec}, still getting a derivation. 
\end{proof}
}

{Then, we define a relation between the auxiliary structures used in the two judgments:
\begin{quote}\label{def-env-rel} 
For all $\map$ and $\Pair{\optmap}{\p}$, $\wdRel{\map}{\optmap}{\p}$ holds iff\\
 $\dom\map=\dom{\optmap}$ and, for all $\x\in\dom\map$, $\map(\x)=\sumFrom(\optmap(x),\p)$.
\end{quote}}
In this way, we have the following generalization, whose straightforward proof is in the Appendix.
\begin{theorem}\label{theo:wd-eq-gen} 
If $\wdRel{\optmap}{\optmap}{\p}$, then, for all $\sv$, $\WD{\sv}{\map}$ is derivable iff $\owd{\sv}{\optmap}{\p}$ is derivable.
\end{theorem}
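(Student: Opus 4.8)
The plan is to prove the theorem by a structural induction on the stream value $\sv$, carrying the relation $\wdRel{\map}{\optmap}{\p}$ as an invariant that is re-established at every recursive call. Since in each of the two systems, at a given $\sv$ and auxiliary data, at most one rule applies (for the $\owd{}{}{}$ system, trivially; for the $\WD{}{}$ system, after excluding \refToRule{wd-delay}, which by \refToLemma{remove-delay} is harmless), it suffices to check, case by case on the shape of $\sv$, that the applicable rule on one side has a matching applicable rule on the other side, with premises related again by $\bowtie$. The key arithmetic fact underlying the correspondence is that $\sumFrom(\optmap(\x),\p)$ "reads off" from the path $\p$ exactly the net count of constructors/right-interleave-operands minus tails traversed since $\x$ was first encountered, which is precisely the integer that $\map(\x)$ tracks directly; so the two data structures are just two presentations of the same information.

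I would organize the induction cases as follows. For $\sv = \cons{\nv}{\sv'}$: the left side uses \refToRule{wd-cons} with premise $\WD{\sv'}{\incrMap}$, the right side uses \refToRule{owd-cons} with premise $\owd{\sv'}{\optmap}{\p\appOp 1}$; I must check $\wdRel{\incrMap}{\optmap}{\p\appOp 1}$, which holds because appending $1$ to $\p$ increases $\sumFrom(\optmap(\x),\p)$ by exactly $1$ for every $\x\in\dom\optmap$ (the new element is included in every suffix-sum since all recorded indices are $\leq\len{\p}$). The case $\sv=\tail{\sv'}$ is symmetric, appending $-1$. For $\sv=\PW{\sv_1}{\pwnop}{\sv_2}$, rules \refToRule{wd-nop}/\refToRule{owd-nop} pass the same data to both subterms, so the invariant is preserved verbatim. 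For $\sv=\sv_1\IL\sv_2$, rules \refToRule{wd-$\IL$}/\refToRule{owd-$\IL$} keep the data unchanged for $\sv_1$ and increment (resp.\ append $1$) for $\sv_2$, again matched by the same suffix-sum computation as in the \refToRule{wd-cons} case. For $\sv=\x$ with $\x\notin\dom{\mapEnv}$, both sides close immediately by \refToRule{wd-fv}/\refToRule{owd-fv}. For $\sv=\x$ with $\x\in\dom\optmap=\dom\map$ (second occurrence): the left side can fire \refToRule{wd-corec} iff $\map(\x)>0$, the right side can fire \refToRule{owd-corec} iff $\sumFrom(\optmap(\x),\p)>0$, and these conditions coincide by $\wdRel{\map}{\optmap}{\p}$. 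For $\sv=\x$ with $\x\notin\dom\optmap$ but $\x\in\dom\mapEnv$ (first occurrence): the left side applies \refToRule{wd-var} going to $\WD{\mapEnv(\x)}{\Update{\map}{\x}{0}}$, the right side applies \refToRule{owd-var} going to $\owd{\mapEnv(\x)}{\Update{\optmap}{\x}{\len\p}}{\p}$, and I must verify $\wdRel{\Update{\map}{\x}{0}}{\Update{\optmap}{\x}{\len\p}}{\p}$: the domains both grow by $\x$, old variables are unaffected, and for the new $\x$ we need $0=\sumFrom(\len\p,\p)$, which holds since starting the suffix at position $\len\p$ (past the last element) yields the empty sum.

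One subtlety to handle carefully: the induction as phrased ranges over $\sv$, but in the first-occurrence case the recursive call is on $\mapEnv(\x)$, which is not a subterm of $\sv$, so a plain structural induction on $\sv$ does not literally close. This is the same phenomenon already present in \refToFigure{op-wd} itself, where derivations can be infinite; the clean fix is to prove the two implications by induction on the (finite) derivation tree on the side assumed to be derivable — i.e.\ "if $\owd{\sv}{\optmap}{\p}$ has a (finite) derivation and $\wdRel{\map}{\optmap}{\p}$, then $\WD{\sv}{\map}$ is derivable", by induction on that derivation, and symmetrically for the converse — rather than on the syntax of $\sv$. With that framing each case above goes through immediately, since every premise of the assumed derivation is a strictly smaller derivation and, as checked, its auxiliary data remains $\bowtie$-related. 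The main obstacle is thus purely bookkeeping: getting the right induction measure (derivation height, not term size) and being meticulous that "at most one rule applies" genuinely holds after removing \refToRule{wd-delay}; once that is set up, the arithmetic of $\sumFrom$ versus the running counter in $\map$ is routine.
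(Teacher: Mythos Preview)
Your proposal is correct and takes essentially the same approach as the paper: both directions are proved by induction on the assumed derivation, with a case analysis on the last rule applied, checking that the $\bowtie$ relation is preserved at each premise. Your write-up is in fact more explicit than the paper's in two respects: you spell out the arithmetic facts that make the invariant go through (e.g.\ $\sumFrom(\len\p,\p)=0$ for the \refToRule{wd-var}/\refToRule{owd-var} case, and that appending $\pm 1$ to $\p$ shifts every suffix-sum by $\pm 1$), and you explicitly flag why a naive structural induction on $\sv$ fails and must be replaced by induction on the derivation height---the paper simply writes ``by induction on the definition of $\WD{\sv}{\map}$'' without comment.
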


\begin{corollary}\label{theo:wd-eq} 
$\WD{\sv}{\emptyMap}$ is derivable iff $\owd{\sv}{\emptyMap}{\emptyPath}$ is derivable.
\end{corollary}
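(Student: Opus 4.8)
The plan is to derive the corollary as the instance of \refToTheorem{wd-eq-gen} in which all auxiliary data are empty. First I would verify that $\wdRel{\emptyMap}{\emptyMap}{\emptyPath}$ holds: by definition of $\bowtie$, this requires $\dom{\emptyMap}=\dom{\emptyMap}$, i.e.\ $\emptyset=\emptyset$, together with $\map(\x)=\sumFrom(\optmap(\x),\p)$ for every $\x\in\dom{\emptyMap}$, which holds vacuously. Instantiating \refToTheorem{wd-eq-gen} with $\map=\optmap=\emptyMap$ and $\p=\emptyPath$ then yields precisely that, for all $\sv$, $\WD{\sv}{\emptyMap}$ is derivable iff $\owd{\sv}{\emptyMap}{\emptyPath}$ is derivable. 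Hence the corollary is immediate once \refToTheorem{wd-eq-gen} is available; all the work lies in the latter, whose proof I sketch next.

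For \refToTheorem{wd-eq-gen} I would argue by induction on derivations, once in each direction, taking $\wdRel{\map}{\optmap}{\p}$ as the invariant that couples a $\mathsf{wd}$-judgment with an $\mathsf{owd}$-judgment. The rules of \refToFigure{op-wd} and \refToFigure{opt-wd} are in one-to-one correspondence except for \refToRule{wd-delay}, which has no optimized counterpart; for the direction from $\mathsf{wd}$ to $\mathsf{owd}$ I would therefore first apply \refToLemma{remove-delay} to assume the given derivation does not use \refToRule{wd-delay}, after which every step matches the homonymous rule of \refToFigure{opt-wd}. The induction then reduces to checking, rule by rule, that $\bowtie$ is preserved when passing to the premises. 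The arithmetic facts needed are: (i) $\sumFrom(\len{\p},\p)=0$, so that after \refToRule{wd-var}/\refToRule{owd-var} the structures $\Update{\map}{\x}{0}$ and $\Pair{\Update{\optmap}{\x}{\len{\p}}}{\p}$ are again related; (ii) $\sumFrom(\indx,\p\appOp b)=\sumFrom(\indx,\p)+b$ whenever $\indx\leq\len{\p}$, which is exactly what makes appending $1$ correspond to $\incrMap$ in \refToRule{wd-cons}/\refToRule{owd-cons} and in the right premise of \refToRule{wd-$\IL$}/\refToRule{owd-$\IL$}, and appending $-1$ correspond to $\decrMap$ in \refToRule{wd-tail}/\refToRule{owd-tail}; (iii) under $\bowtie$, the side condition $\map(\x)>0$ of \refToRule{wd-corec} holds iff the side condition $\sumFrom(\optmap(\x),\p)>0$ of \refToRule{owd-corec} holds, while the side conditions involving $\dom{\mapEnv}$ match because $\dom\map=\dom\optmap$ and both maps only ever record variables of $\dom\mapEnv$. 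Rules \refToRule{wd-nop}/\refToRule{owd-nop} and \refToRule{wd-fv}/\refToRule{owd-fv} leave the structures unchanged, so are immediate, and the converse direction, from $\mathsf{owd}$ to $\mathsf{wd}$, replays the same correspondence, turning each use of \refToRule{owd-corec} into a use of \refToRule{wd-corec}.

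The main obstacle I foresee is just the bookkeeping around this invariant: establishing fact (ii) by an auxiliary induction on $\p$, and observing that an index recorded for a variable never exceeds the length of the current path (so that $\sumFrom$ stays defined throughout the derivation, which is in fact forced by $\bowtie$, since $\sumFrom(\optmap(\x),\p)$ must equal the integer $\map(\x)$). Once these are settled and \refToLemma{remove-delay} has eliminated \refToRule{wd-delay}, the rule-by-rule matching is routine, and the corollary follows by setting the map and the path to empty.
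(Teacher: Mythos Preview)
Your proposal is correct and matches the paper's approach: the corollary is obtained exactly by instantiating \refToTheorem{wd-eq-gen} with $\map=\optmap=\emptyMap$ and $\p=\emptyPath$, after the vacuous check that $\wdRel{\emptyMap}{\emptyMap}{\emptyPath}$. Your additional sketch of \refToTheorem{wd-eq-gen} also mirrors the paper's rule-by-rule induction in the appendix, and your explicit appeal to \refToLemma{remove-delay} to dispose of \refToRule{wd-delay} in the forward direction is in fact a small improvement over the paper, which silently omits that case.
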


\section{Related and future work}\label{sect:conclu}
As  mentioned in \refToSection{intro}, our approach extends regular corecursion, which originated from \emph{co-SLD resolution} \cite{Simon06,SimonBMG07,Ancona13,AnconaDovier15}, where already considered goals (up to unification), called \emph{coinductive hypotheses}, are successfully solved. Language constructs that support this programming style have also been proposed in the functional \cite{Jeannin17} and object-oriented \cite{AnconaZ12,AnconaBDZ20} paradigm. 

There have been a few attempts of extending the expressive power of regular corecursion. Notably, \emph{structural resolution} \cite{KomendantskayaJS16,KomendantskayaPS16} is an operational semantics for logic programming where infinite derivations that cannot be built in finite time are generated lazily, and only partial answers are shown. Another approach is the work in \cite{Courcelle83}, introducing algebraic trees and equations as generalizations of regular ones. \EZ{Such proposals share, even though with different techniques and in a different context,  our aim of extending regular corecursion; on the other hand, the fact that corecursion is \emph{checked} is, at our knowledge, a novelty of our work.}

\EZComm{ripescato da ICTCS}
\EZ{For the operators considered in the calculus and some examples, our main sources of inspiration have been the works of
Rutten \cite{Rutten05}, where a coinductive calculus of streams of real numbers is defined,
and Hinze \cite{Hinze10}, where a calculus of generic streams is defined in a constructive way and implemented in Haskell.}

 \EZ{In this paper, as in all the above mentioned approaches derived from co-SLD resolution, the aim is to provide an \emph{operational} semantics, designed to directly lead to an implementation.
 That is, even though streams are infinite objects (terms where the constructor is the only operator, defined coinductively), evaluation handles \emph{finite} representations, and is defined by an \emph{inductive} inference system. Coinductive approaches can be adopted to obtain more abstract semantics of calculi with infinite terms. For instance, \cite{Czajka20} defines a coinductive semantics of the infinitary lambda-calculus where, roughly, the semantics of terms with an infinite reduction sequence is the infinite term obtained as limit. In coinductive logic programming, co-SLD resolution is the operational counterpart of a coinductive semantics where a program denotes a set of infinite terms. In \cite{AnconaBDZ20}, analogously, regular corecursion is shown to be sound with respect to an abstract coinductive semantics using \emph{flexible coinduction} \cite{AnconaDZ@esop17,Dagnino21}, see below.}

Our calculus is an enhancement of that presented in \cite{AnconaBZ21}, with two main significant contributions: (1) the interleaving operator,
challenging since it is based on a non-trivial recursion schema; (2) an optimized definition} of the {runtime} well-definedness check, as {a useful} basis for an implementation.
  \EZComm{ripristinato}
\EZ{Our main technical results are \refToTheorem{iff}, stating that passing the runtime well-definedness check performed for a function call {prevents} non-termination in accessing elements in the resulting stream, and \refToTheorem{wd-eq-gen}, stating that the optimized version is equivalent.}

{Whereas in \cite{AnconaBZ21} the well-definedness check was also a necessary condition to guarantee termination, this is not the case here, due to the interleaving operator.} Consider,  for instance, the following example:
  $\mapEnv=\{ s\mapsto (\tail{s} \IL s) \IL 0{:}s \}$. The judgment $\WD{s}{\emptyset}$ is not derivable, in particular because of $\tail{s}$, since
  ${\WD{s}{\{s\mapsto -1\}}}$ is not derivable and, hence, $\WD{\tail{s}}{\{s\mapsto 0\}}$, $\WD{\tail{s}\IL s}{\{s\mapsto 0\}}$, and \linebreak ${\WD{(\tail{s}\IL s)\IL 0{:}s}{\{s\mapsto 0\}}}$. However, $\Atop{\mapEnv}{s}{i}$ is well-defined for all indexes $i$; indeed, $\At{\mapEnv}{s}{1}{0}$  is derivable, $\At{\mapEnv}{s}{0}{k}$ is derivable iff $\At{\mapEnv}{s}{1}{k}$ is derivable, and, for all $i>1$, $\At{\mapEnv}{s}{i}{k}$ is derivable iff
  $\At{\mapEnv}{s}{j}{k}$ is derivable for some $j<i$, hence $\At{\mapEnv}{s}{i}{0}$ is derivable for all $i$.
  We leave for future work the investigation of a complete check.} 
 
In future work, we plan to also prove soundness of the operational well-definedness with respect to its abstract definition. Completeness does not hold, as shown by the example
\lstinline!zeros() = [0] [*] zeros()! \DAComm{Counter examples for future work}
which is not well-formed operationally, but admits as unique solution the stream of all zeros.

Finally, in the presented calculus a cyclic call is detected by rule \refToRule{corec} if it is syntactically \EZ{the same of} some in the call trace.
Although such a rule allows cycle detection  for all the examples \EZComm{vero?} presented in this paper, it is not complete with respect to the abstract
notion where expressions denoting the same stream are equivalent, as illustrated by the following alternative definition of function \lstinline{incr}
as presented in \refToSection{examples}:
\begin{lstlisting} 
incr_reg(s) = (s(0)+1):incr_reg(s^) 
\end{lstlisting}
If syntactic equivalence is used to detect cycles, then the call \lstinline{incr_reg([0])} diverges, since the terms passed as argument to the recursive calls are all syntactically different; as an example, consider the arguments $\x$ and $\tail{x}$ passed to the initial call and to the first recursive call, respectively, in the environment $\mapEnv=\{\x\mapsto \cons{0}{\x}\}$; they are syntactically different, but denote the same stream.

\EZComm{cut: Indeed, we get an infinite derivation for \lstinline!first2([1])!
  with call traces of increasing shape
  ${\{\mbox{\lstinline!first2($x$)!}\mapsto \y_1,\mbox{\lstinline!first2(1:$x$^)!}\mapsto \y_2,\mbox{\lstinline!first2(1:(1:$x$^)^)!}\mapsto \y_3,\ldots\}}$ in the environment $\{\x\mapsto\cons{1}{\x}\}$.}
In future work we plan to investigate more expressive operational characterizations of equivalence.

Other interesting directions for future work are the following.
\begin{itemize}
\item  Investigate additional operators and the expressive power of the calculus.\DAComm{Conjecture regarding the expressive power of the calculus: (1) more expressive than polynomial streams (easy to prove, by exploiting that the fib function is not definable with a polynomial) (2) not possible to extract the stream of elements at even index (to be proved).}
\item  Design a static type system to prevent runtime errors such as the non-well-definedness of a stream.
\item Extend corecursive definition to \emph{flexible} corecursive definitions \cite{Dagnino21,DagninoAZ20} where programmers can define specific behaviour when a cycle is detected.
\EZ{In this way we could get termination in cases where lazy evaluation diverges. For instance, assuming to allow also booleans results for functions,
we could define the predicate \lstinline{allPos}, checking that all the elements of a stream are positive, specifying as result \lstinline{true} when a cycle is detected; in this way, e.g., \lstinline{allPos(one_two)} would return the correct result. }
\end{itemize}

\newpage


\bibliographystyle{plain}
\bibliography{bib}

\begin{thebibliography}{10}

\bibitem{Ancona13}
Davide Ancona.
\newblock Regular corecursion in {P}rolog.
\newblock {\em Computer Languages, Systems {\&} Structures}, 39(4):142--162,
  2013.

\bibitem{AnconaBDZ20}
Davide Ancona, Pietro Barbieri, Francesco Dagnino, and Elena Zucca.
\newblock Sound regular corecursion in {coFJ}.
\newblock In Robert Hirschfeld and Tobias Pape, editors, {\em ECOOP'20 -
  Object-Oriented Programming}, volume 166 of {\em LIPIcs}, pages 1:1--1:28.
  Schloss Dagstuhl - Leibniz-Zentrum f{\"u}r Informatik, 2020.

\bibitem{AnconaBZ21}
Davide Ancona, Pietro Barbieri, and Elena Zucca.
\newblock Enhanced regular corecursion for data streams.
\newblock In {\em ICTCS'21 - Italian Conf. on Theoretical Computer Science},
  2021.

\bibitem{AnconaDZ@esop17}
Davide Ancona, Francesco Dagnino, and Elena Zucca.
\newblock Generalizing inference systems by coaxioms.
\newblock In Hongseok Yang, editor, {\em 26th European Symposium on
  Programming, {ESOP} 2017}, volume 10201 of {\em Lecture Notes in Computer
  Science}, pages 29--55, Berlin, 2017. Springer.

\bibitem{AnconaDovier15}
Davide Ancona and Agostino Dovier.
\newblock A theoretical perspective of coinductive logic programming.
\newblock {\em Fundamenta Informaticae}, 140(3-4):221--246, 2015.

\bibitem{AnconaZ12}
Davide Ancona and Elena Zucca.
\newblock Corecursive {F}eatherweight {J}ava.
\newblock In {\em FTfJP'12 - Formal Techniques for Java-like Programs}, pages
  3--10. ACM Press, 2012.

\bibitem{Courcelle83}
Bruno Courcelle.
\newblock Fundamental properties of infinite trees.
\newblock {\em Theoretical Computer Science}, 25:95--169, 1983.

\bibitem{Czajka20}
Lukasz Czajka.
\newblock A new coinductive confluence proof for infinitary lambda calculus.
\newblock {\em Logical Methods in Computer Science}, 16(1), 2020.

\bibitem{Dagnino21}
Francesco Dagnino.
\newblock {\em Flexible Coinduction}.
\newblock PhD thesis, DIBRIS, University of Genova, 2021.

\bibitem{DagninoAZ20}
Francesco Dagnino, Davide Ancona, and Elena Zucca.
\newblock Flexible coinductive logic programming.
\newblock {\em Theory and Practice of Logic Programming}, 20(6):818--833, 2020.
\newblock Issue for ICLP 2020.

\bibitem{Hinze10}
Ralf Hinze.
\newblock Concrete stream calculus: An extended study.
\newblock {\em Journal of Functional Programming}, 20(5–6):463–535, 2010.

\bibitem{JeanninK12}
Jean{-}Baptiste Jeannin and Dexter Kozen.
\newblock Computing with capsules.
\newblock {\em Journal of Automata, Languages and Combinatorics},
  17(2-4):185--204, 2012.

\bibitem{Jeannin17}
Jean-Baptiste Jeannin, Dexter Kozen, and Alexandra Silva.
\newblock {CoCaml}: Functional programming with regular coinductive types.
\newblock {\em Fundamenta Informaticae}, 150:347--377, 2017.

\bibitem{KomendantskayaJS16}
Ekaterina Komendantskaya, Patricia Johann, and Martin Schmidt.
\newblock A productivity checker for logic programming.
\newblock In Manuel~V. Hermenegildo and Pedro L{\'{o}}pez{-}Garc{\'{\i}}a,
  editors, {\em Logic-Based Program Synthesis and Transformation - {LOPSTR}
  2016, Revised Selected Papers}, volume 10184 of {\em Lecture Notes in
  Computer Science}, pages 168--186. Springer, 2016.

\bibitem{KomendantskayaPS16}
Ekaterina Komendantskaya, John Power, and Martin Schmidt.
\newblock Coalgebraic logic programming: from semantics to implementation.
\newblock {\em J. Log. Comput.}, 26(2):745--783, 2016.

\bibitem{Rutten05}
Jan J. M.~M. Rutten.
\newblock A coinductive calculus of streams.
\newblock {\em Mathematical Structures in Computer Science}, 15(1):93--147,
  2005.

\bibitem{Simon06}
Luke Simon.
\newblock {\em Extending logic programming with coinduction}.
\newblock PhD thesis, University of Texas at Dallas, 2006.

\bibitem{SimonBMG07}
Luke Simon, Ajay Bansal, Ajay Mallya, and Gopal Gupta.
\newblock Co-logic programming: Extending logic programming with coinduction.
\newblock In Lars Arge, Christian Cachin, Tomasz Jurdzinski, and Andrzej
  Tarlecki, editors, {\em Automata, Languages and Programming, 34th
  International Colloquium, {ICALP} 2007}, volume 4596 of {\em Lecture Notes in
  Computer Science}, pages 472--483. Springer, 2007.

\end{thebibliography}


\appendix
\section{Examples of derivations}

\begin{figure}
\begin{small}
$$\begin{array}{l}
\f()=\g()\\
g()=1:\f()
\\[6ex]
\infeRule{invk}
{\infeRule{invk}
{\infeRule{cons}
{\infeRule{val}
{}
{\opsem{1}{\emptyset}{\{\f()\mapsto\x,\g()\mapsto\y\}}{1}{\emptyset}}
&
\infeRule{corec}
{}
{\opsem{\f()}{\emptyset}{\{\f()\mapsto\x,\g()\mapsto\y\}}{\x}{\emptyset}}}
{\opsem{1:\f()}{\emptyset}{\{\f()\mapsto\x,\g()\mapsto\y\}}{1:\x}{\emptyset}}}
{\opsem{\g()}{\emptyset}{\{\f()\mapsto\x\}}{\y}{\{\y\mapsto1:\x\}}}}
{\opsem{\f()}{\emptyset}{\emptyset}{\x}{\{\x\mapsto\y,\y\mapsto1:\x\}}}
\end{array}$$
\end{small}
\caption{Example of derivation}\label{fig:derivation1}
\end{figure}

\begin{figure}
\begin{small}
$$\begin{array}{l}
\f()=\g(2:\f())\\
g(s)=1:s
\\[6ex]
\infeRule{invk}
{\infeRule{args}
{\infeRule{cons}
{\infeRule{val}
{}
{\opsem{2}{\emptyset}{\{\f()\mapsto\x\}}{2}{\emptyset}}
&
\infeRule{corec}
{}
{\opsem{\f()}{\emptyset}{\{\f()\mapsto\x\}}{\x}{\emptyset}}}
{\opsem{2:\f()}{\emptyset}{\{\f()\mapsto\x\}}{2:\x}{\emptyset}} & \DA{T_1}}
{\opsem{\g(2:\f())}{\emptyset}{\{\f()\mapsto\x\}}{\y}{\{\y\mapsto1:2:\x\}}}
}
{\opsem{\f()}{\emptyset}{\emptyset}{\x}{\{\x\mapsto\y,\y\mapsto1:2:\x\}}}
\\[8ex]
T_1=
\infeRule{invk}
{\infeRule{val}
{}
{\opsem{1:2:\x}{\emptyset}{\{\g(2:\x)\mapsto\y,\f()\mapsto\x\}}{1:2:\x}{\emptyset}}}
{\opsem{\g(2:\x)}{\emptyset}{\{\f()\mapsto\x\}}{\y}{\{\y\mapsto1:2:\x\}}}
\end{array}$$
\end{small}
\caption{Example of derivation}\label{fig:derivation2}
\end{figure}

\begin{figure}
\begin{small}
$$\begin{array}{l}
\undef()=(\undef()(0)):\undef()
\\[6ex]
\infeRule{invk}
{\infeRule{cons}
{T_1
&
\infeRule{corec}
{}
{\opsem{\undef()}{\emptyset}{\{\undef()\mapsto\x\}}{\x}{\emptyset}}}
{\opsem{(\undef()(0)):\undef()}{\emptyset}{\{\undef()\mapsto\x\}}{?}{?}}
}
{\opsem{\undef()}{\emptyset}{\emptyset}{?}{?}}
{}
\\[8ex]
T_1=
\infeRule{at}
{\infeRule{corec}
{}
{\opsem{\undef()}{\emptyset}{\{\undef()\mapsto\x\}}{\x}{\emptyset}}
&
\infeRule{val}
{}
{\opsem{0}{\emptyset}{\{\undef()\mapsto\x\}}{0}{\emptyset}}}
{\opsem{\undef()(0)}{\emptyset}{\{\undef()\mapsto\x\}}{?}{?}}
\end{array}$$
\end{small}
\caption{Example of stuck derivation}\label{fig:stuck_derivation}
\end{figure}

\begin{figure}
\begin{small}
$$\begin{array}{l}
\infeRule{wd-var}
{\infeRule{wd-cons}
{\infeRule{{wd-nop}}
{\infeRule{wd-corec}
{}
{\WD{\x}{\{\x\mapsto1\}}}
&
\infeRule{wd-var}
{\infeRule{wd-cons}
{\infeRule{wd-corec}
{}
{\WD{\y}{\{\x\mapsto\DA{2},\y\mapsto 1\}}}}
{\WD{\cons{1}{\y}}{\{\x\mapsto1,\y\mapsto0\}}}}
{\WD{\y}{\{\x\mapsto1\}}}}
{\WD{\x\ [+]\ \y}{\{\x\mapsto1\}}}}
{\WD{\cons{0}{(\x\ [+]\ \y{)}}}{\{\x\mapsto0\}}}}
{\WD{\x}{\emptyset}}
\end{array}$$
\\
$$\begin{array}{l}
\infeRule{at-var}
{\infeRule{at-cons-succ}
{\infeRule{at-op}
{\infeRule{at-var}
{\begin{array}{c}{\infeRule{at-cons-0}{}{\At{\mapEnv}{\cons{0}{(\x\ [+]\ \y)}}{0}{0}}}\\\vdots\end{array}}
{\At{\mapEnv}{\x}{\indx-1}{\indx-1}}
&
\infeRule{at-var}
{
\begin{array}{c}{\infeRule{at-cons-0}{}{\At{\mapEnv}{\cons{1}{\y}}{0}{1}}}\\\vdots\end{array}
}
{\At{\mapEnv}{\y}{\indx-1}{{1}}}}
{\At{\mapEnv}{\x\ [+]\ \y}{\indx-1}{{\indx}}}}
{\At{\mapEnv}{\cons{0}{(\x\ [+]\ \y)}}{\indx}{\indx}}}
{\At{\mapEnv}{\x}{\indx}{\indx}}
\end{array}$$
\end{small}
\caption{Derivations for $\mapEnv=\Caps{\x}{\{\x\mapsto \cons{0}{(\x\ [+]\ \y)},\ y\mapsto \cons{1}{\y}\}}$}\label{fig:ex1}
\end{figure}

\begin{figure}
\begin{small}
$$\begin{array}{l}
\infeRule{wd-var}{
\infeRule{wd-cons}{
\infeRule{wd-$\IL$}{
\infeRule{wd-nop}{
\infeRule{wd-corec}{
}
{\WD{\x}{\{\x\mapsto1\}}}\ &
\begin{array}{c}
\vdots \\ \ 
\end{array}
}
{\WD{\x\ [+]\ \y}{\{\x\mapsto1\}}} &
\infeRule{wd-nop}{
\infeRule{wd-corec}{
}
{\WD{\x}{\{\x\mapsto2\}}}\  &
\begin{array}{c}
\vdots \\ \ 
\end{array}
}
{\WD{\x\ [+]\ \y}{\{\x\mapsto2\}}}}
{\WD{(\x\ [+]\ \y)\ {\IL}\ (\x\ [+]\ \y)}{\{\x\mapsto1\}}}}
{\WD{\cons{0}{((\x\ [+]\ \y)\ {\IL}\ (\x\ [+]\ \y))}}{\{\x\mapsto0\}}}}
{\WD{\x}{\emptyset}}
\end{array}$$
\\[4ex]
$$\begin{array}{l}
\infeRule{at-var}{
\infeRule{at-cons-succ}{
\infeRule{at-$\IL$-\{even,odd\}}{
\infeRule{at-nop}{
\infeRule{at-var}{
\vdots}{\At{\mapEnv}{\x}{(\indx-1)/2}{f_i-1}}&
\infeRule{at-var}{
\vdots}{\At{\mapEnv}{\y}{(\indx-1)/2}{1}}
}
{\At{\mapEnv}{\x\ [+]\ \y}{(\indx-1)/2}{f_i}}}
{\At{\mapEnv}{(\x\ [+]\ \y)\ {\IL}\ (\x\ [+]\ \y)}{\indx-1}{f_i}}}
{\At{\mapEnv}{\cons{0}{(\x\ [+]\ \y)\ {\IL}\ (\x\ [+]\ \y)}}{\indx}{f_i}}}
{\At{\mapEnv}{\x}{\indx}{f_i}}
\end{array}$$
\end{small}
\caption{Derivations for ${\mapEnv=\{\x\mapsto \cons{0}{((\x\ [+]\ \y)\ ||\ (\x\ [+]\ \y))}, \y\mapsto 1:\y\}}$ and \mbox{$f_i=\mathtt{bfs\_level()(i)}$, $\indx>1$}}
\label{fig:ex2}
\end{figure}

\pagebreak

\section{Proofs}

\textbf{\refToLemma{basics}}.\
\begin{enumerate}
\item A judgment $\WD{\sv}{\emptyset}$ has no derivation
iff the following condition holds:\\
\begin{tabular}{ll}
\refToRule{wd-stuck}&$\WD{\x}{\map'}\premisestar \WD{\mapEnv(\x)}{\Update{\map}{\x}{0}}\premise\WD{\x}{\map}\premisestar\WD{\sv}{\emptyMap}$\\
&for some $\x\in\dom{\mapEnv}$, 
and $\map',\map$ s.t.\ $\x\not\in\dom{\map}, \map'(\x)\leq 0$.
\end{tabular}
\item If the derivation of $\AT{\sv}{\jndx}$ 
is infinite, then the following condition holds:\\
\begin{tabular}{ll}
\refToRule{at-$\infty$}&$\AT{\x}{\indx+k}\premisestar\AT{\mapEnv(\x)}{\indx}\premise\AT{\x}{\indx}\premisestar\AT{\sv}{\jndx}$\\
&for some $\x\in\dom{\mapEnv}$, and $\indx, k \geq 0$.
\end{tabular}
\end{enumerate}

\begin{proof}
\begin{enumerate}
\item If $\WD{\sv}{\emptyset}$ has no derivation, since for each $\WD{\sv'}{\map}$ there is an applicable rule, unless in the case $\WD{\x}{\map}$ with $\map(\x)\leq 0$, then condition \refToRule{wd-stuck} holds.
On the other hand, if $\WD{\sv}{\emptyset}$ is derivable, then by \refToLemma{remove-delay} it is also derivable without using \refToRule{wd-delay}. Hence, there can be no path from $\WD{\sv}{\emptyset}$ of judgments on variables in $\dom{\mapEnv}$, and a (first) repeated variable, that is, of the shape below, where $\x\not\in\dom{\map}$: 
\begin{small}
$$\WD{\x}{\map'}\WD{\x_n}{\_}\ldots\WD{\x_1}{\_}\WD{\x}{\map}\WD{\y_m}{\_}\ldots\WD{\y_1}{\_}\ldots\WD{\sv}{\emptyset}$$
\end{small}
\item For each $\AT{\sv}{\indx}$ there is exactly one applicable rule, unless in the case $\AT{\x}{\indx}$ with $\x\not\in\dom{\mapEnv}$. Moreover, since $\mapEnv$ has finite domain, the derivation $\AT{\sv}{\jndx}$ is infinite iff there is a variable, say $\x$, which is repeated infinitely many times, and the indexes cannot be always decreasing.
\end{enumerate}
\end{proof}

\leavevmode

\noindent\textbf{\refToTheorem{wd-eq-gen}}
If $\wdRel{\optmap}{\optmap}{\p}$, then, for all $\sv$, $\WD{\sv}{\map}$ is derivable iff $\owd{\sv}{\optmap}{\p}$ is derivable.
\begin{proof}[$\Rightarrow$] 
By induction on the definition of $\WD{\sv}{\map}$. 
\begin{description}
\item[\refToRule{wd-var}] By hypothesis, $\WD{\mapEnv(\x)}{\Update{\map}{\x}{0}}$ holds with $\x\not\in\dom\map$. By inductive hypothesis, $\owd{\mapEnv(\x)}{\optmap}{\p}$ holds with $\sumFrom(\optmap(x),\p)=0=m(x)$. Thus, we can apply rule \refToRule{owd-var} and have the thesis.
\item[\refToRule{wd-corec}] By hypothesis, $\WD{\x}{\map}$ holds with $\x\in\dom\map$ and $\map(x)>0$. Thus, since $\sumFrom(\optmap(x),\p)>0$ by the definition of $\bowtie$, we have the thesis thanks to rule \refToRule{owd-corec}.
\item[\refToRule{wd-fv}] By hypothesis, $\WD{\x}{\map}$ holds with $\x\not\in\dom\mapEnv$. We immediately have the thesis thanks to rule \refToRule{owd-fv}.
\item[\refToRule{wd-cons}] By hypothesis, $\WD{\sv}{{\incrMap}}$ holds. By inductive hypothesis,\\ $\owd{\sv}{\optmap}{{p\appOp 1}}$ holds. Moreover, for all $\x\in\dom\map$ so that $\incrMap(\x)=i+1$, we have that $\sumFrom(\optmap(x),\p \appOp 1)=i+1$ by definition of $\bowtie$. Thus, we have the thesis thanks to rule \refToRule{owd-cons}.  
\item[\refToRule{wd-tail}] By hypothesis, $\WD{\sv}{{\decrMap}}$ holds. By inductive hypothesis,\\ $\owd{\sv}{\optmap}{{p\appOp (-1)}}$ holds. Moreover, for all $\x\in\dom\map$ so that $\decrMap(\x)=i-1$, we have that $\sumFrom(\optmap(x),\p \appOp (-1))=i-1$ by definition of $\bowtie$. Thus, we have the thesis thanks to rule \refToRule{owd-tail}. 
\item[\refToRule{wd-nop}] By hypothesis, $\WD{\sv_1}{\map}$ and $\WD{\sv_2}{\map}$ hold. By inductive hypothesis, $\owd{\sv_1}{\optmap}{\p}$ and $\owd{\sv_2}{\optmap}{\p}$ hold. Thus, we have the thesis thanks to rule \refToRule{owd-nop}.
{
\item[\refToRule{wd-\DA{$\IL$}}] By hypothesis, $\WD{\sv_1}{\map}$ and $\WD{\sv_2}{\incrMap}$ hold. By inductive hypothesis, $\owd{\sv_1}{\optmap}{\p}$ and $\owd{\sv}{\optmap}{{p\appOp 1}}$ hold. Moreover, for all $\x\in\dom\map$ so that $\incrMap(\x)=i+1$, we have that $\sumFrom(\optmap(x),\p \appOp 1)=i+1$ by definition of $\bowtie$. Thus, we have the thesis thanks to rule \refToRule{owd-$\IL$}.
}
\end{description}
\end{proof}

\begin{proof}[$\Leftarrow$]\\
By induction on the definition of $\owd{\sv}{\optmap}{\p}$. 
\begin{description}
{
\item[\refToRule{owd-var}] By hypothesis, $\owd{\mapEnv(\x)}{\Update{\optmap}{\x}\indx}{\p}$ holds with $\x\not\in\dom\optmap$ and $\l=\len{\p}$. By inductive hypothesis, $\WD{\mapEnv(\x)}{\Update{\map}{\x}{0}}$ holds with $\sumFrom(\optmap(x),p)=0=m(x)$. Thus, we can apply rule \refToRule{wd-var} and have the thesis.
\item[\refToRule{owd-corec}] By hypothesis, $\owd{\x}{\optmap}{\p}$ holds with $\x\in\dom\optmap$ and\\ $
\sumFrom(\map(\x),\p)>0$. Thus, since $\map(x)>0$ by the definition of $\bowtie$, we have the thesis thanks to rule \refToRule{wd-corec}.
\item[\refToRule{owd-fv}] By hypothesis, $\owd{\x}{\optmap}{\p}$ holds with $\x\not\in\dom\mapEnv$. We immediately have the thesis thanks to rule \refToRule{wd-fv}.
\item[\refToRule{owd-cons}] By hypothesis $\owd{\sv}{\optmap}{{p\appOp 1}}$ holds. By inductive hypothesis \\ $\WD{\sv}{{\incrMap}}$ holds. Moreover, for all $\x\in\dom\map$ so that $\sumFrom(\optmap(x),\p \appOp 1)=i+1$, we have that $\incrMap(\x)=i+1$ by definition of $\bowtie$. Thus, we have the thesis thanks to rule \refToRule{wd-cons}.
\item[\refToRule{owd-tail}] By hypothesis, $\owd{\sv}{\optmap}{{p\appOp (-1)}}$ holds. By inductive hypothesis, $\WD{\sv}{{\decrMap}}$ holds. Moreover, for all $\x\in\dom\map$ so that $\sumFrom(\optmap(x),\p \appOp (-1))=i-1$, we have that $\decrMap(\x)=i-1$ by definition of $\bowtie$. Thus, we have the thesis thanks to rule \refToRule{wd-tail}.
\item[\refToRule{owd-nop}] By hypothesis, $\owd{\sv_1}{\optmap}{\p}$ and $\owd{\sv_2}{\optmap}{\p}$ hold. By inductive hypothesis, $\WD{\sv_1}{\map}$ and $\WD{\sv_2}{\map}$ holds. Thus, we have the thesis thanks to rule \refToRule{wd-nop}.
\item[\refToRule{owd-$\IL$}] By hypothesis, $\owd{\sv_1}{\optmap}{\p}$ and $\owd{\sv}{\optmap}{{p\appOp 1}}$ hold. By inductive hypothesis, $\WD{\sv_1}{\map}$ and $\WD{\sv_2}{\incrMap}$ hold. Moreover, for all $\x\in\dom\map$ so that $\sumFrom(\optmap(x),\p \appOp 1)=i+1$ , we have that $\incrMap(\x)=i+1$ by definition of $\bowtie$. Thus, we have the thesis thanks to rule \refToRule{wd-$\IL$}.
}
\end{description}
\end{proof}

\end{document}